\documentclass[12pt]{article}
\usepackage[margin=1 in]{geometry} 
\RequirePackage{multicol}
\RequirePackage{setspace}
\RequirePackage[leqno]{amsmath}
\usepackage{mathtools}
\usepackage{graphicx}

\newcommand{\simCirc}{\scalebox{0.6}{\ooalign{$\sim$\cr\hidewidth$\circ$\hidewidth\cr}}}
\newcommand{\lowoverset}[2]{\overset{\raisebox{-2pt}{$#1$}}{#2}}



\makeatletter
\renewcommand\section{\@startsection{section}{1}{\z@}%
  {-3.5ex \@plus -1ex \@minus -.2ex}%
  {2.3ex \@plus.2ex}%
  {\bfseries\large\center}} 

\renewcommand\subsection{\@startsection{subsection}{2}{\z@}%
  {-3.25ex\@plus -1ex \@minus -.2ex}%
  {1.5ex \@plus .2ex}%
  {\itshape\large\center}} 

\renewcommand\refname{REFERENCES}

\newcommand\TenPtJEL{\@setfontsize\TenPtJEL{10.5}{12}}
\long\def\JEL#1{\long\gdef\@JEL{#1}}

  
\makeatother
	\usepackage[english]{babel} 
	\usepackage{authblk}
	\usepackage{amsmath, amssymb, textcomp, amsthm, mathtools,mathrsfs} 
	\allowdisplaybreaks
 \usepackage{longtable}
	\usepackage{wasysym} 
	\usepackage{stmaryrd}
	\usepackage{dsfont}
 \usepackage{setspace}
 \usepackage{pdflscape}

	\usepackage{hyperref}
 \usepackage{fancyhdr}
   	\usepackage{booktabs}
   	\usepackage{enumitem}
   	\usepackage{inputenc}
	\usepackage{csquotes}
	\usepackage{floatrow}
	\floatsetup[table]{capposition=top} 
	\usepackage{footnote,footmisc} 
	\usepackage{setspace} 
	\usepackage{threeparttable}
	\usepackage{tikz}
 \usetikzlibrary{calc, decorations.pathreplacing}
\usepackage{comment}
\usepackage{graphicx}
\usepackage{algorithm}
\usepackage{algorithmic}
 \usepackage{subcaption}
 \usepackage{pgfplots}
\usepgfplotslibrary{statistics}
\pgfplotsset{compat=1.16}

	\usepackage{pgfplots}
	\pgfplotsset{compat=1.17}
	\usetikzlibrary{shapes.misc}
	\usetikzlibrary{patterns}
	\usepackage{xcolor}
	\hypersetup{
    	colorlinks,
    	linkcolor={red!50!black},
    	citecolor={blue!50!black},
    	urlcolor={blue!80!black}
	}
	\usetikzlibrary{decorations.pathreplacing}
	\usetikzlibrary{arrows}
	\usepackage{float} 
	\usepackage{pifont}

\usepackage{floatrow}

	\newcommand*\dd{\mathop{}\!\mathrm{d}}

	\usepackage[toc,page]{appendix}
	
	\newtheorem{theorem}{Theorem}
	 
	\newtheorem{proposition}{Proposition}

	\newtheorem{assumption}{Assumption}

	\newtheorem*{example*}{Example}
	\usepackage{graphicx}  
	\usepackage{marvosym}

    \DeclareMathOperator*{\argmin}{\arg\!\min}

    \newtheorem*{assumptions*}{\assumptionnumber}
\providecommand{\assumptionnumber}{}

\usepackage{caption}
\usepackage{mwe}

\definecolor{cyan}{cmyk}{1, 0.4, 0, 0}

\definecolor{mypink}{RGB}{219, 48, 122}

\def\XXint#1#2#3{{\setbox0=\hbox{$#1{#2#3}{\int}$ }
\vcenter{\hbox{$#2#3$ }}\kern-.6\wd0}}

\usepackage[round]{natbib}
\setlength{\bibsep}{0pt} 

\makeatletter

\allowdisplaybreaks

 \def\thanks#1{\protected@xdef\@thanks{\@thanks
        \protect\footnotetext{#1}}}
\makeatother
\title{\Huge{Return to Office and the Tenure Distribution}}
\author{\normalsize David Van Dijcke$^{1,2}$\thanks{\texttt{\{dvdijcke;ffg\}@umich.edu, austinlw@uchicago.edu}. We thank Mel Stephens and Germano Wallerstein for helpful discussion and comments as well as Ewan Rawcliffe for outstanding research assistance. The authors gratefully acknowledge support from the Peterson G. Peterson Foundation Pandemic Response Policy Research Fund and the Becker Friedman Institute. FG gratefully acknowledges support from a MITRE research award from the University of Michigan.}}
\author{\normalsize Florian Gunsilius$^{1}$}
\author{Austin Wright$^{3}$}
\affil{\footnotesize $^{1}$Department of Economics, University of Michigan \\ 
$^{2}$Risk Analytics Division, Ipsos Public Affairs \\ 
$^{3}$Harris School of Public Policy, University of Chicago}
\date{\vspace{-1em} \small \today}


\begin{document}

\maketitle
\vspace{-2em}
\begin{abstract} 
\noindent With the official end of the COVID-19 pandemic, debates about the return to office have taken center stage among companies and employees. Despite their ubiquity, the economic implications of return to office policies are not fully understood. Using 260 million resumes matched to company data, we analyze the causal effects of such policies on employees' tenure and seniority levels at three of the largest US tech companies: Microsoft, SpaceX, and Apple. Our estimation procedure is nonparametric and captures the full heterogeneity of tenure and seniority of employees in a distributional synthetic controls framework. We estimate a reduction in counterfactual tenure that increases for employees with longer tenure. Similarly, we document a leftward shift in the seniority distribution towards positions below the senior level. These shifts appear to be driven by employees leaving to larger firms that are direct competitors. Our results suggest that return to office policies can lead to an outflow of senior employees, posing a potential threat to the productivity, innovation, and competitiveness of the wider firm.

\noindent \\ 
\emph{JEL Codes:} J21, J62, M54, C14 \\ 
\emph{Keywords:} causal inference, distributional synthetic controls, human capital, return to office, seniority, tenure, work from home

\end{abstract}
\thispagestyle{empty} 
\newpage
\setcounter{page}{1} 


\section{Introduction} Since the World Health Organization officially declared an end to the COVID-19 health crisis, a main focus of companies has shifted towards creating workplace policies for a post-pandemic economy. In this, the preferences of employers and employees often seem diametrically opposed. That opposition has expressed itself most notably in the debate around the effects of the return to office (RTO). This debate is far from settled: as of June 2023, an estimated 75\% of tech and 28\% of all companies in the United States were still ``fully flexible'' -- either fully remote or with a voluntary in-office option \citep{flex2023}. Most business executives expect this number to grow \citep{bloom2023survey}. In this paper, we provide causal evidence that RTO mandates at three large tech companies---Microsoft, SpaceX, and Apple---had a negative effect on the tenure and seniority of their respective workforce. In particular, we find the strongest negative effects at the top of the respective distributions, implying a more pronounced exodus of relatively senior personnel.

Our ability to account for the heterogeneity in the organizational structure of companies is of fundamental importance in this setting. To do so, we estimate the composition of the workforce in the counterfactual reality where the company did not implement an RTO mandate. We document changes in the employee composition that can profoundly impact a company, especially since these changes occur at the top of the hierarchy. Senior employees and those that have been at a company for a long time possess invaluable human capital and tend to have elevated productivity levels \citep{lazear2015value}, which they take with them when leaving the company. They also represent a significant investment in terms of hiring and training costs \citep{blatter2012costs}. Indeed, the central tenet of resource-based theory, one of the most prevalent and empirically well-supported human capital theories, is that human capital is key to explaining why some firms outperform others \citep{crook2011does}. Changes in the employee distributions can, moreover, have ripple effects in the larger economy, by affecting the distribution of skilled workers across different types of companies \citep{lise2017macrodynamics, eeckhout2018assortative}. 

We uncover these results by estimating the distributional causal effects of a return to office mandate on the makeup of a company's workforce using a synthetic controls approach \citep{abadie2003economic, abadie2010synthetic}. Since our object of interest is the heterogeneity in the composition of the workforce of a company, we employ the distributional synthetic controls estimator proposed in \citet{gunsilius2023distributional}. This estimator is ideally suited for this setting, since it does not require observing the same individuals over time and as such allows for the changes in workforce composition that we are interested in. Furthermore, it provides a nonparametric estimate of the effect of an RTO on the entire tenure and seniority distribution of a target company. As a theoretical contribution, we propose a modified distributional permutation test and provide conditions for the validity of bootstrap-based uniform confidence bands.

Our case-study approach focuses on Microsoft, Apple, and SpaceX because they were among the first large American tech companies to implement RTO mandates. In particular, they did so before a wave of layoffs started hitting the tech industry from late 2022 onwards, allowing us to cleanly disentangle the causal effects of the RTO mandates. Our synthetic controls approach, moreover, allows us to study in detail what happens \textit{within} each company after an RTO, which would not be possible with an aggregate cross-company approach. This guarantees the internal validity of our estimates. Their external validity is twofold. First, we consider some of the largest companies in a sector where the discourse over the return to office was most heated. Together, the three companies account for over 2\% of employment in the tech sector and 30\% of its revenue. What happens at these companies matters for the American economy, and sets the precedent for the wider debate around the return to office. Second, we estimate nearly identical effects for all three companies despite their markedly different corporate culture and product gamut, suggesting the effects are driven by common underlying dynamics. 

Our paper contributes to several branches of the literature. First, it addresses the relatively understudied impact of return-to-office (RTO) mandates. For instance, \citet{ding_return--office_2023} analyze the determinants and effects of RTO mandates for a sample of S\&P 500 companies by examining company balance sheets and Glassdoor reviews. They suggest that RTOs are used by managers to reassert control, which in turn reduces employee satisfaction. Using a voluntary survey of managers and their employees, \citet{van_triest_your_2023} demonstrates that work-from-home (WFH) levels are 50\% lower when managers have the discretion to set RTO policies, as opposed to being constrained by organization-wide rules.  \citet{barrero_why_2021} report that four out of ten employees working from home would find a new job under an RTO mandate, based on a large-scale survey of American workers. Several business and industry reports have also studied RTO. For example, a large-scale survey by \citet{unispace2023returning} found that almost half of firms with mandated RTOs experienced higher than usual difficulties in retaining employees and lost ``key'' staff members, in line with our findings. Our paper complements and extends this literature by estimating a significant outflow of more senior employees after the implementation of an RTO mandate. Unlike these previous studies, we rely on large-scale resume data rather than surveys, allowing us to study the realized consequences of RTO policies rather than self-reported behavior or preferences. Moreover, the richness of our data and estimation approach allows us to speak to causal rather than merely descriptive effects. It also allows us to study what happens to the organizational structure of companies that return to the office, which is a first-order effect of such a return. 

Second, our paper relates to the large literature on WFH. Focusing on the impact of WFH on employee tenure and seniority, this body of work provides several insights relevant to ours.\footnote{For a comprehensive survey of the effects of WFH, we refer to \citet[\S 2.3]{ding_return--office_2023} and \citet{barrero_evolution_2023}.} \citet{mas_valuing_2017} estimate, using an experimental design, that workers had limited willingness to pay for hybrid work arrangements before the pandemic, with a fat tail of high-value workers, which are generally more senior. \citet{emanuel2023power} find that proximity to coworkers in the office increases the feedback software engineers receive but reduces their programming output, especially for senior engineers. \citet{hansen_remote_2023} study hybrid work options in large-scale job postings data and find substantial heterogeneity within occupation and even company. \citet{mischke2023empty} and \citet{barrero2023evolution} find that office attendance is lower in larger firms, in line with our finding that employees leave for such firms at higher rates after an RTO. \citet{barrero_shift_2022} find that a shift to remote work puts downward pressure on wages due to lower employee turnover, which may have reduced the college wage premium as highly educated workers are more likely to WFH \citep{barrero_evolution_2023}. Finally, \cite{bloom2022hybrid} document lower attrition rates under a hybrid work model compared to a fully-in-person one.

An important distinction between our focus on RTO mandates and the WFH literature is that we consider the effects of a \textit{mandatory shift} to (partial) office attendance from an optional hybrid model, while the WFH literature has mostly focused on static comparisons between fully-remote, hybrid, and fully-in-person work. Moreover, while several of the papers mentioned have documented a positive relation between employee attrition rates and degree of required office attendance, our results illuminate how this attrition realizes across the employee distribution, with senior and long-tenured employees leaving at higher rates. Considering the pivotal role such employees play in a company's development and competitive edge, along with the significant costs involved in replacing the firm-specific human capital accumulated over their tenure \citep{gerhart2021resource}, our findings indicate that such ``uneven'' attrition should be a major concern for any company considering a return to office.

The rest of this paper is structured as follows. In Section \ref{sec:empirical_approach} we discuss our empirical approach in terms of data and estimation and present some descriptive evidence; in Section \ref{sec:results} we present our causal results and discuss their implications in light of the academic literature; before concluding in Section \ref{sec:conclusion}.

\section{Empirical Approach} \label{sec:empirical_approach}
 
\subsection{Data} \label{sec:data}

\paragraph{People Data Labs} We study the impact of RTO across the employee distribution using data provided by People Data Labs (PDL) (\url{https://www.peopledatalabs.com/}) via the Dewey Data platform (\url{https://deweydata.io/}). The dataset consists of global resume data for nearly 1 billion unique individuals, with 1.2 billion person records in North America from over 300 million resumes. The data contains information on the start and end date of a person's position at a given company, as well as the role (trainee, entry, partner, etc) of the position. 

We estimate the tenure distribution within a company during a given quarter by calculating
\begin{equation}
\mathrm{Tenure}_{ijt} \coloneqq \begin{cases} 
t - \mathrm{T_{0, ij}} & \text{if } T_{ij} > t \\ 
T_{ij} - T_{0,ij} & \text{else}
\end{cases}
\end{equation}
where $t$ indexes the end of the quarter, $T_{0,ij}$ is the start date of employee $i$ at company $j$, and $T_{ij}$ is the end date. Then, we further sum this variable across multiple subsequent employment episodes at the same company for the same individual, to calculate total tenure for a single employment spell. This is needed because some employees list a single employment spell at the same company under separate resume items to indicate promotions. 

\begin{table}[ht!]
\centering
\footnotesize
\caption{Seniority Titles}
\label{tab:title_ranking}
\begin{tabular}{ccccccccccc}
\hline
\textbf{Title} & Unpaid & Training & Entry & Manager & Senior & Owner & Partner & Director & VP & CXO \\
\hline
\textbf{Ranking} & 1 & 2 & 3 & 4 & 5 & 6 & 7 & 8 & 9 & 10 \\
\hline
\end{tabular}
\floatfoot{\textit{Note}: table displays seniority levels and corresponding numeric code of employees' positions as classified in the People Data Labs data.}
\end{table}

As another proxy for seniority within the company, we use the discrete distribution of job titles in the PDL data. The lowest level is ``unpaid'' and the highest ``CXO'', corresponding to a C-suite title. The full list of titles and their corresponding numeric codes is shown in Table \ref{tab:title_ranking}. Based on this, we define an employee's title within a given quarter as the highest title it held within the company in that quarter.  

\paragraph{Return to Office}

We collect data on the return to office dates of tech firms from two sources. First, we manually collect the month a return-to-office mandate was implemented for 55 large tech firms (e.g. Google, Apple, IBM) based on news reports and leaks on a popular anonymous employee forum, Blind. Of those, 19 companies implemented an RTO mandate, while the other 36 continued to allow remote work to varying degrees as of February 2023. We depict the chronology of the return to office dates we collected in Figure \ref{fig:rto_dates}. Though this is not necessarily a representative sample, we can see that RTO implementations only picked up more rapidly from spring 2023 onwards. 

\begin{figure}[ht!]
\centering
\includegraphics[width=.9\textwidth]{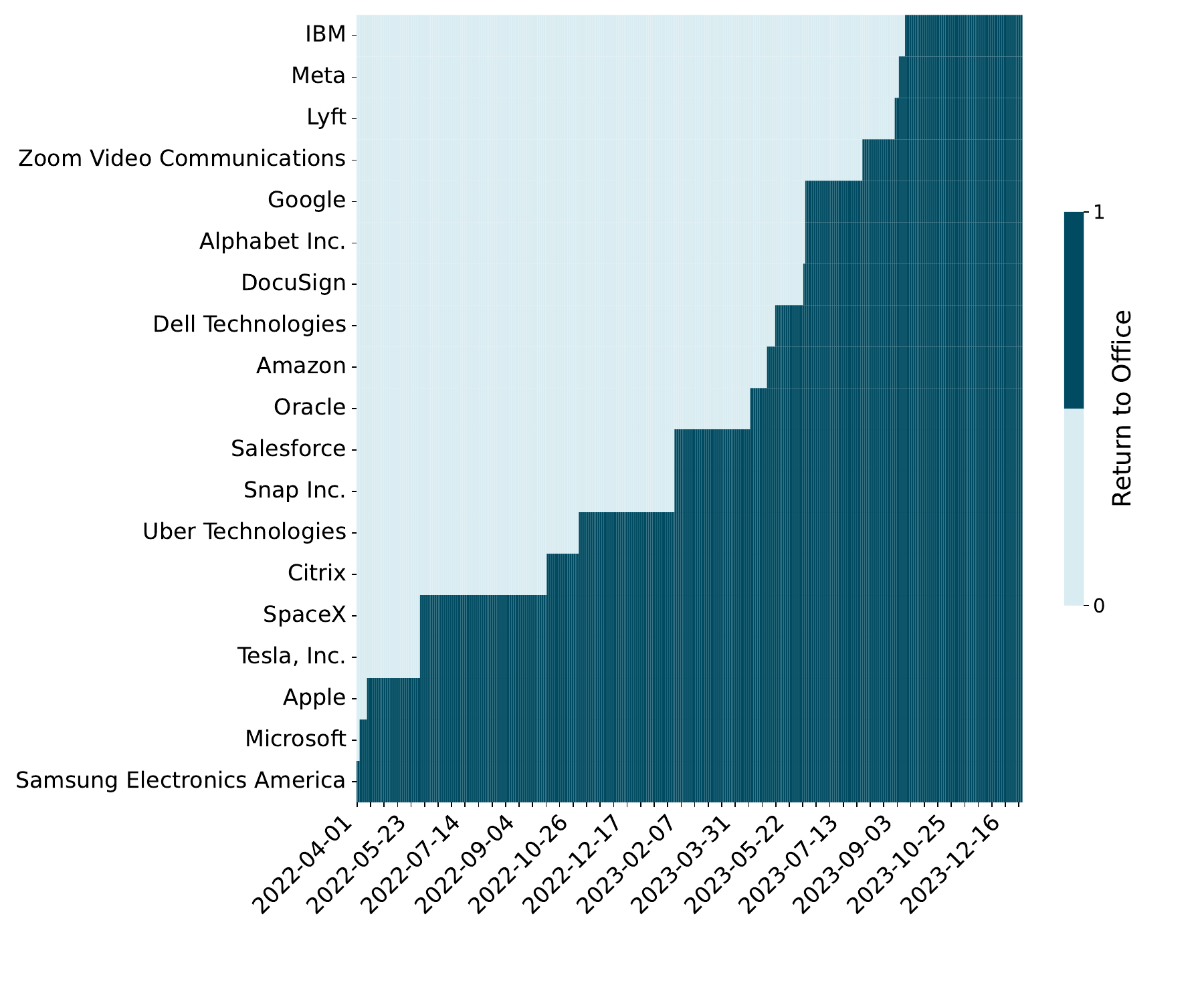}
\caption{Return to Office Dates of Large Tech Firms}
\floatfoot{\textit{Note}: figure shows return to office dates of major tech firms, ordered chronologically at a frequency of 7 weeks. Dark blue color indicates the firm has implemented an RTO mandate.}
\label{fig:rto_dates}
\end{figure}

Second, we supplement this list with data scraped from the Flex Index by Scoop Technologies, a hybrid working platform (\url{https://www.flex.scoopforwork.com/}). These data provide information on the remote work arrangements of more than 8,000 companies. While they do not contain information on the date these companies returned to work (if at all), we leverage these data to expand our set of ``never-treated'' companies that never returned to the office. These serve as our control units, together with companies that were ``not-yet-treated'' until 6 months after the end of our sample period. We were able to match these data directly to the PDL data based on the company name and website domain, as both fields had accurate coverage in the two datasets. 

\paragraph{Tech Layoffs}

At the time of writing this article, the tech sector in the United States has experienced persistent rounds of large-scale layoffs since late 2022. As these layoffs were in general independent of companies' RTO mandates, they are likely to confound our estimates. Of course, an employee's refusal to comply with an RTO mandate could affect their likelihood of being laid off, but this does not affect the independence of the \textit{implementation} of both policies. To address this concern, we leverage data from \url{layoffs.fyi}, a website that has tracked the near-universe of tech sector layoffs since the start of the COVID-19 pandemic. The website aggregates information on layoffs from news reports, company announcements, and crowdsourced reports from employees themselves. We match these data to the PDL data with a combination of fuzzy string matching and manual checking to ascertain we capture all layoffs in the data. 

\paragraph{Sample Construction} 

We begin by identifying the relevant set of ``treated units'' used in the main estimation. To avoid concurrent shocks during the estimation window, we restrict the set of firms to companies without layoffs during the two quarters before and one quarter after their RTO mandate. This restriction yields three firms: Microsoft, SpaceX, and Apple. All three companies implemented their RTO mandate relatively early. They all implemented several rounds of layoffs as well, but, as mentioned, only did so at least one quarter after their return to office. Moreover, they are all large tech firms with several thousands of employees that produce both hardware and software. All three companies also shifted from an optional hybrid model to mandatory office attendance. The degree of office presence mandated varied, with Microsoft mandating 50\% attendance \citep{OLoughlin2022}, Apple requiring 1 day a week \citep{OLoughlin2023}, and SpaceX demanding full-time attendance (40 hours per week) \citep{mac2022}. 
We replicate our main results for these three companies, but focus on Microsoft throughout the paper as it has the largest numbers of resumes in our data, it was the first to implement an RTO among the three companies, and its RTO was most clearly announced and strict in its implementation. 


For each of the three companies, we construct a sample of control units by retaining only those firms that 1) had no layoffs in the sample period of three quarters around the RTO; 2) either never implemented an RTO or implemented it at least two quarters after the end of the sample period; 3) fell under the three NAICS codes most commonly associated with the tech industry -- Information (51), Professional, Scientific and Technical Services (54), or Manufacturing (33) --; and 4) had at least 5\% of the total number of resumes observed for the treated unit, to guarantee comparability. Together, these restrictions leave us with a sample size of 1,141,518 employment-by-quarter observations for 32 total companies in our sample for Microsoft. An ``employment'' here means a single employment of an individual at a given company, where an employee changing job can have two employments within a quarter. 

\subsection{Descriptives} \label{sec:descriptives}

\begin{table}[htbp!]

\begin{tabular}{@{\extracolsep{5pt}}lcccccc} 
\\[-1.8ex]\hline 
\hline \\[-1.8ex] 
Statistic & \multicolumn{1}{c}{Mean} & \multicolumn{1}{c}{Min} & \multicolumn{1}{c}{Pctl(25)} & \multicolumn{1}{c}{Median} & \multicolumn{1}{c}{Pctl(75)} & \multicolumn{1}{c}{Max} \\ 
\hline \\[-1.8ex] 
Median Tenure & 918.078 & 304.000 & 602.375 & 853.000 & 1,226.500 & 2,009.000 \\ 
Median Title & 4.844 & 4 & 5 & 5 & 5 & 5 \\ 
Number of Resumes & 12,411.410 & 2,764 & 3,468.5 & 4,738.5 & 10,707 & 79,570 \\ 
Quarterly Turnover & 0.126 & 0.066 & 0.093 & 0.111 & 0.154 & 0.250 \\ 
Share Female & 0.315 & 0.156 & 0.274 & 0.310 & 0.365 & 0.516 \\ 
Share Leaving to Startups & 0.057 & 0.000 & 0.040 & 0.048 & 0.070 & 0.185 \\ 
\hline \\[-1.8ex] 
\end{tabular} 

\caption{Summary Statistics: Variation Across Companies}
\label{tab:sumstat}
\floatfoot{\textit{Note}: table shows summary statistics for main outcome variables in sample of companies comprising Microsoft and 31 control companies in quarter before Microsoft's RTO. Summary statistics were first aggregated within companies and then the spread of their distribution across companies were reported.}
\end{table}

The data reveal several descriptive patterns that speak to our main results. In Table \ref{tab:sumstat}, we report cross-company summary statistics for our main variables of interest in the quarter before Microsoft's RTO. Specifically, we first aggregate several variables across employees within each company in our main sample and then report the dispersion across companies. This shows that the average tenure at the companies in our sample is around 2.5 years, and the title held by the median employee at most of the companies is at either the manager (4) or senior (5) level. We observe between 2,700 and 80,000 resumes at any of the firms, giving us ample statistical power to estimate the empirical tenure distribution. Moreover, the quarterly turnover is in line with estimates for the information and professional sectors from the Bureau of Labor Statistics's Job Openings and Turnover Survey (\url{https://www.bls.gov/jlt/}); while the share of female employees aligns with other estimates of around 30\% \citep{BLS2021}. Finally, 5\% of employees that left their company within the quarter of interest took up positions at startups, which we classify as firms with less than 50 employees. 

\begin{figure}[ht!]
\centering
\begin{subfigure}[b]{0.49\textwidth}\includegraphics[width=\textwidth]{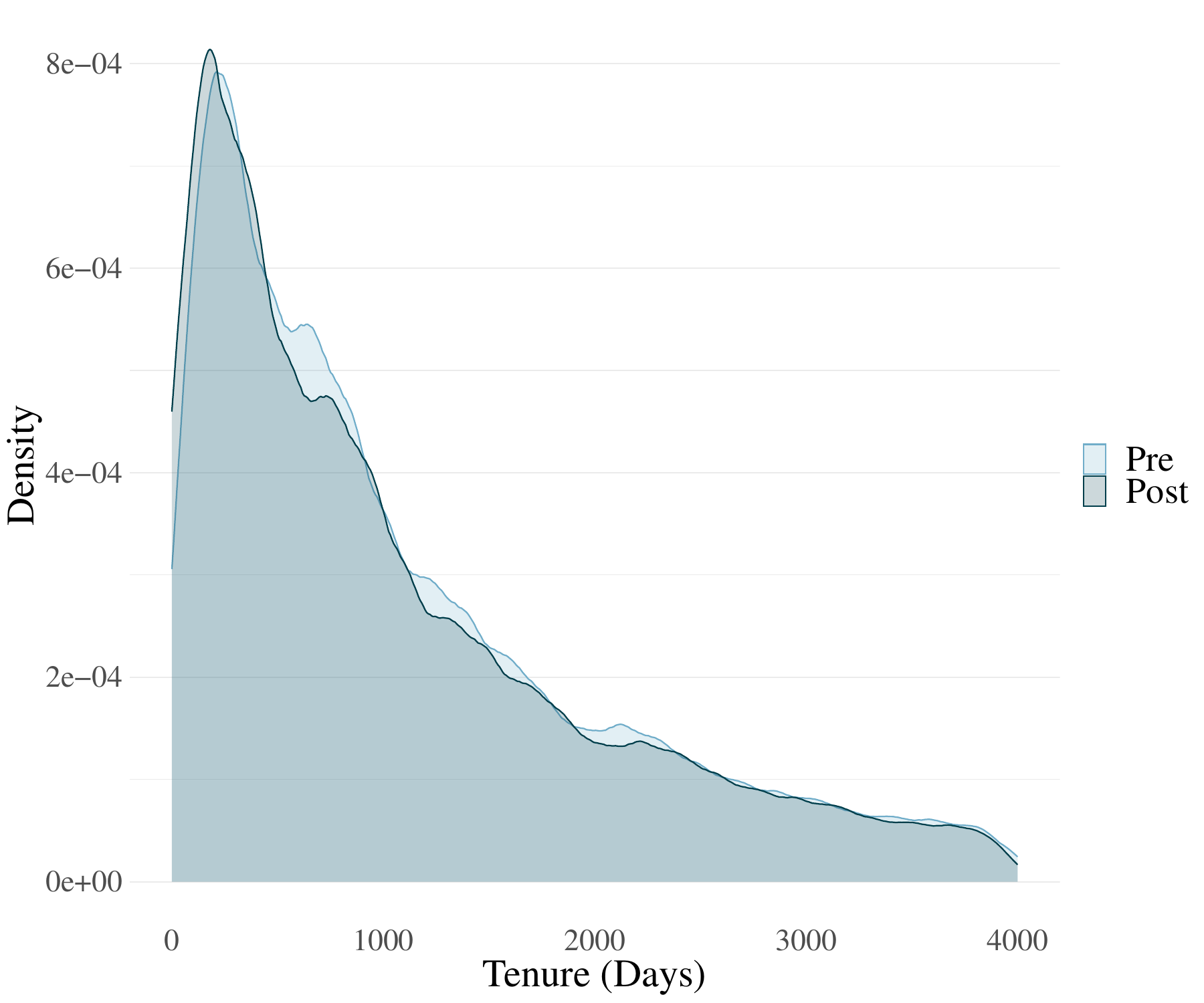}
    \caption{Tenure: Probability Density}
    \label{fig:tenure_pdf}
\end{subfigure}
\begin{subfigure}[b]{0.49\textwidth}
    \includegraphics[width=\textwidth]{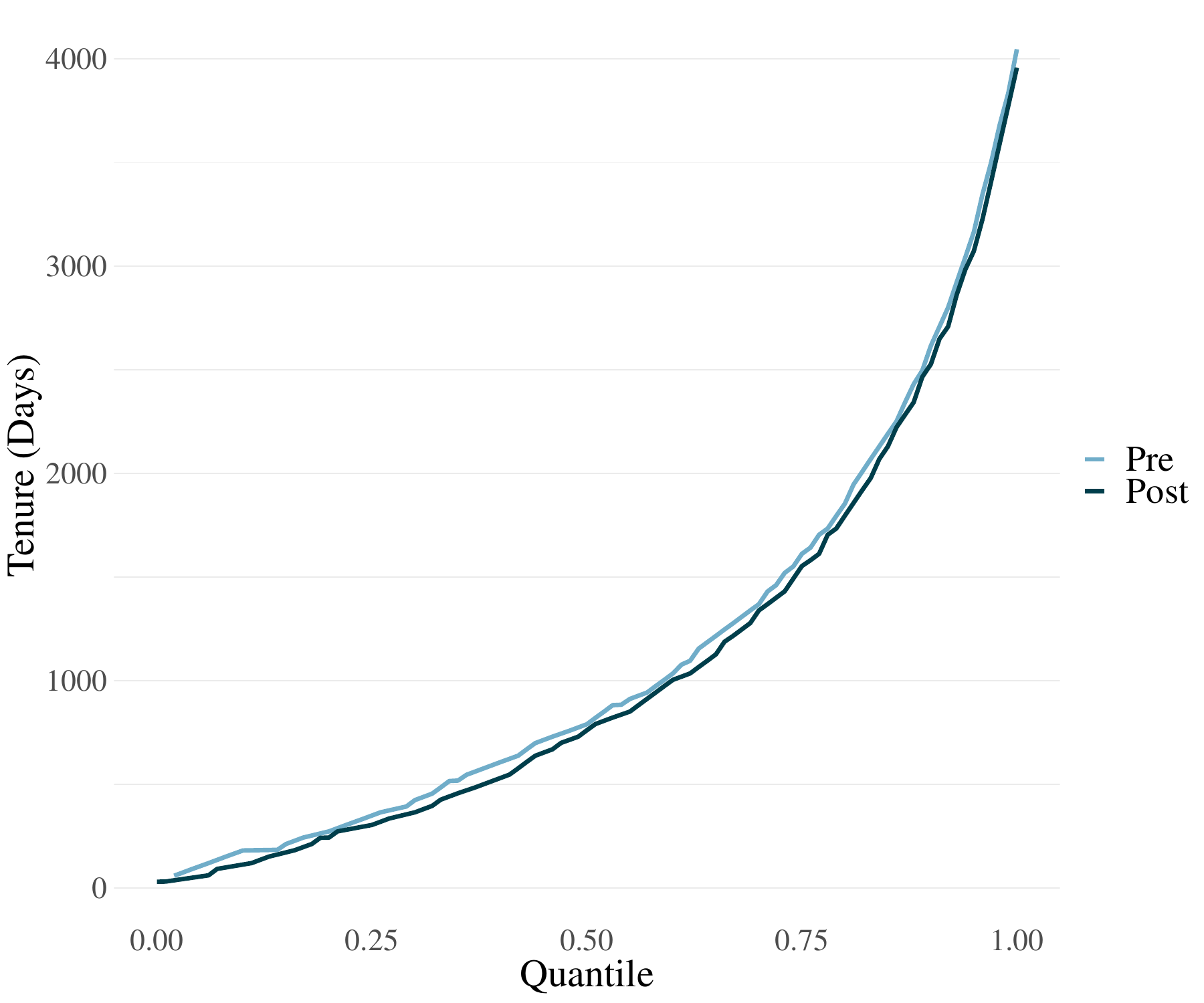}
    \caption{Tenure: Quantile Function}
        \label{fig:tenure_quantile}
\end{subfigure}
\begin{subfigure}[b]{0.49\textwidth}\includegraphics[width=\textwidth]{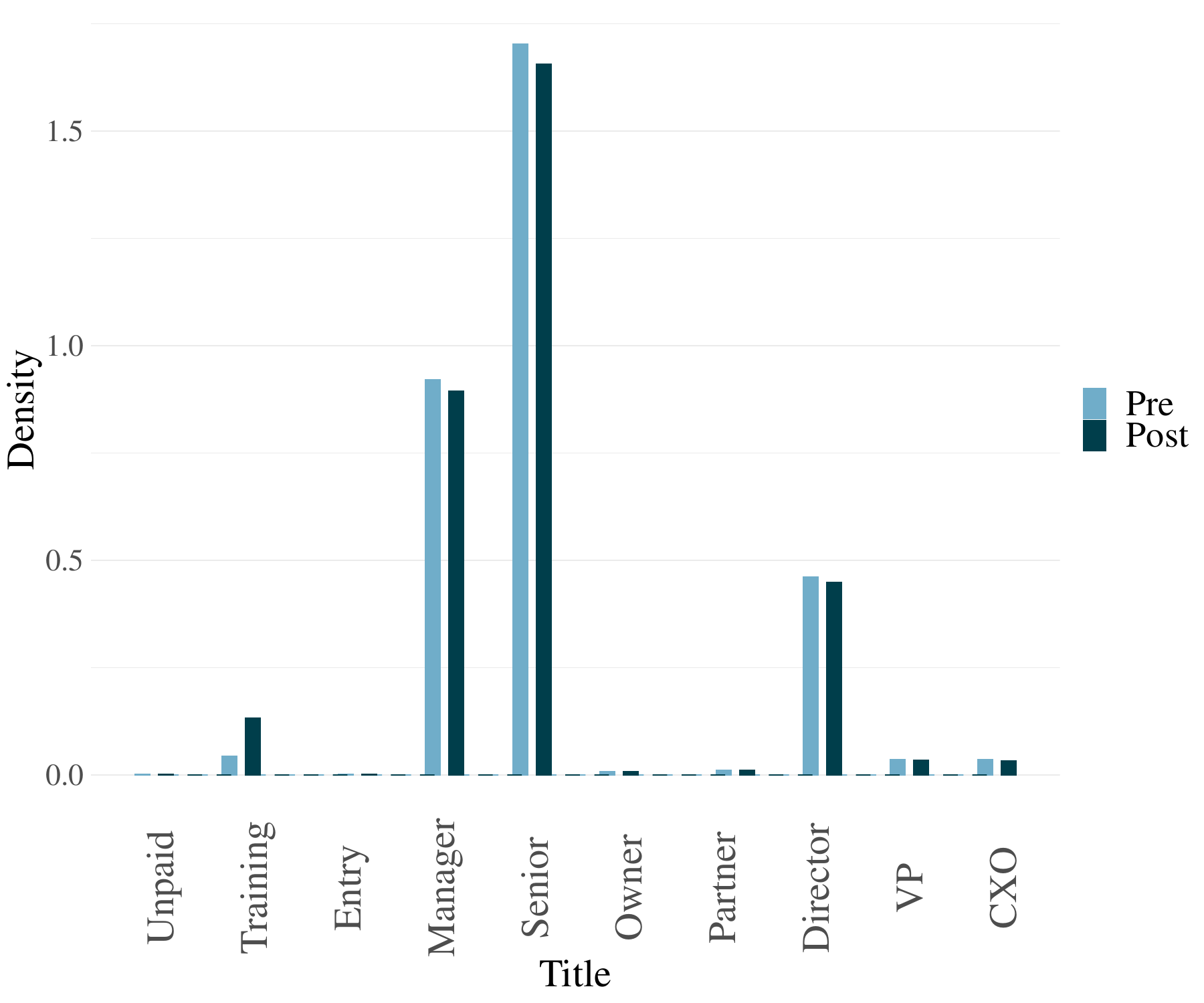}
    \caption{Titles: Histogram}
    \label{fig:title_pdf}
\end{subfigure}
\begin{subfigure}[b]{0.49\textwidth}
    \includegraphics[width=\textwidth]{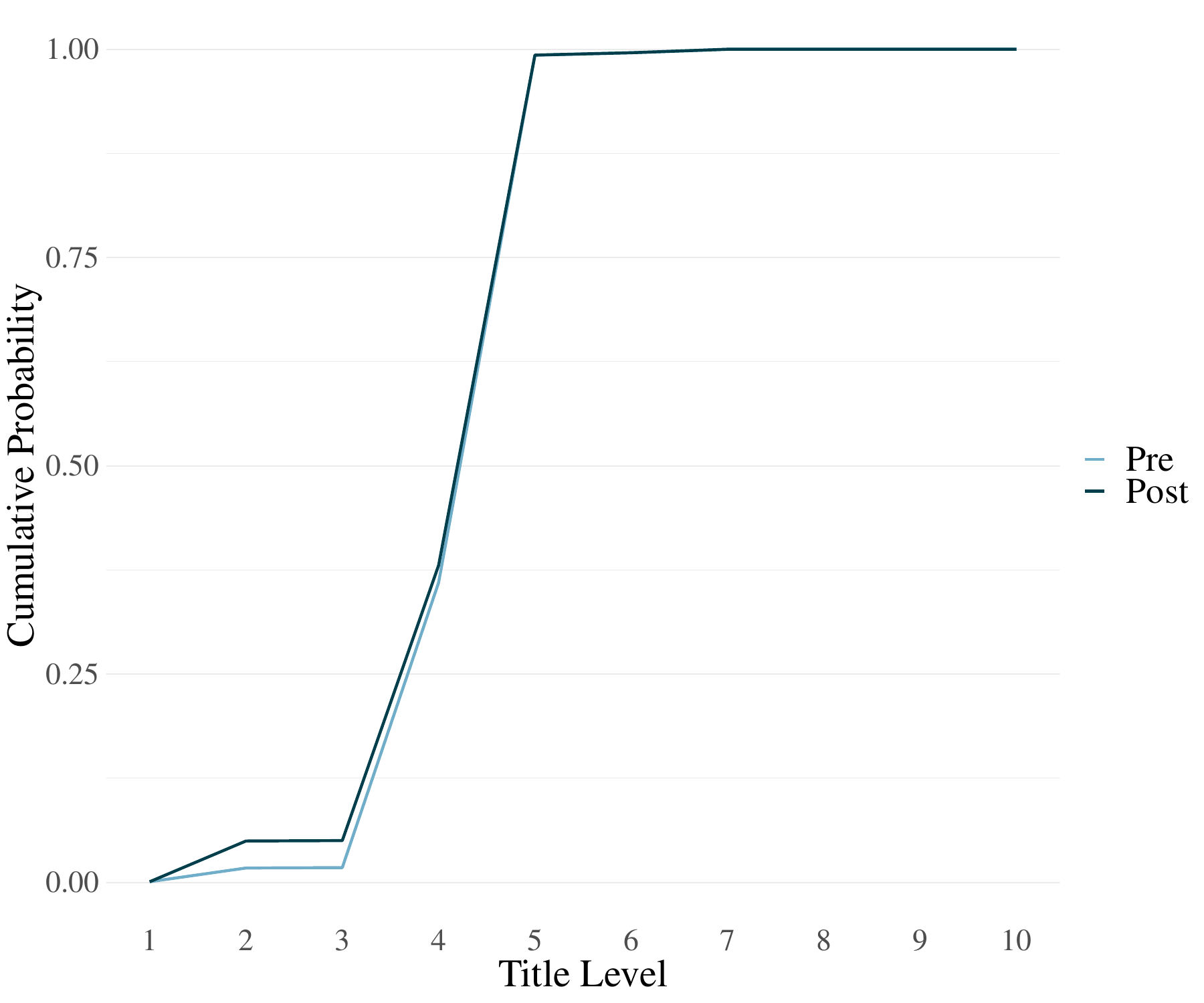}
    \caption{Titles: Cumulative Distribution Function}
        \label{fig:title_cdf}
\end{subfigure}
\floatfoot{\textit{Note}: \ref{fig:tenure_pdf} shows probability density functions (pdf) of tenure (in days) of Microsoft's employees in quarter before (``Pre'', light blue) and after (``Post'', dark blue) RTO, computed using Epanechnikov kernel with bandwidth of 90 days; \ref{fig:tenure_quantile} shows empirical quantile functions of same outcome. \ref{fig:title_pdf} shows histogram of seniority titles; and \ref{fig:title_cdf} shows cumulative distribution functions of the same, with title levels mapped to ordinal variable betwen 1--10 in same order. Tenure distribution is restricted to bottom 90 quantiles.}
\caption{Distribution of Tenure and Titles}
\label{fig:empirical_distributions}
\end{figure}

A preview of the main descriptive results is shown in Figure \ref{fig:empirical_distributions}. There, we display the observed distributions of our main outcome variables, seniority and tenure, in the quarter before (``Pre'') and after (``Post'') Microsoft's RTO. Looking at the tenure distribution in the top row, we can see that both the probability density function and the quantile function shift leftward after the RTO, suggesting a decrease in the total days tenured across the distribution. Similarly, in the bottom row, we observe an increase in the histogram density for employees at the trainee level and a decrease at the senior level. This corresponds to a leftward shift in the cumulative density at seniority levels below 4 in \ref{fig:title_cdf}, which corresponds to the ``manager'' level. Overall, the patterns in the raw data already suggest a decrease in tenure and seniority at Microsoft after its RTO. Of course, these patterns are only correlational insofar as they do not account for how the distributions would have evolved in the absence of a return to office and thus fail to account for confounding shocks unrelated to the RTO mandate. This is the purpose of our synthetic controls approach, which we describe and estimate in the next section. 

\begin{figure}[ht!]
\centering
\begin{subfigure}[b]{0.49\textwidth}
    \includegraphics[width=\textwidth]{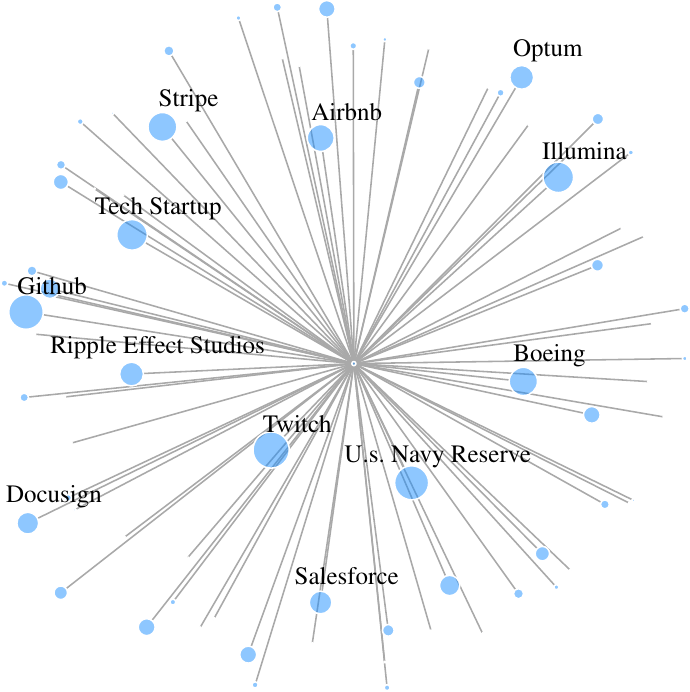}
    \caption{Pre-RTO}
        \label{fig:network_pre}
\end{subfigure}
\begin{subfigure}[b]{0.49\textwidth}
    \includegraphics[width=\textwidth]{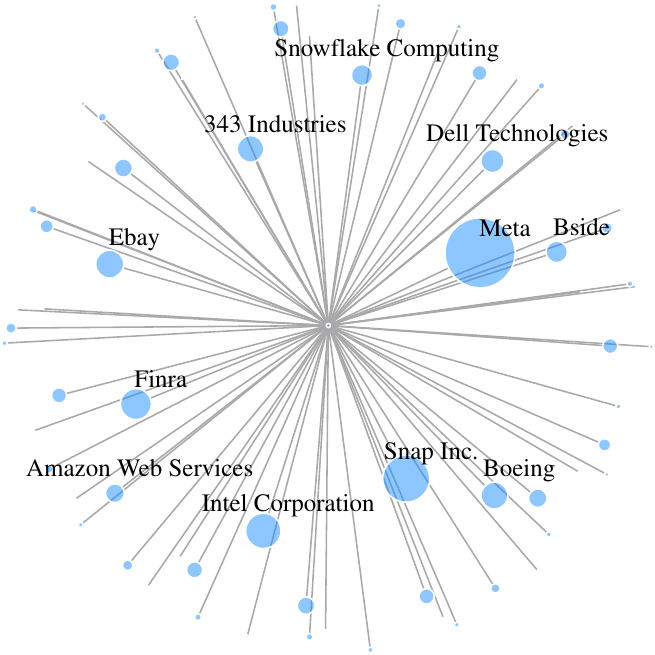}
    \caption{Post-RTO}
        \label{fig:network_post}
\end{subfigure}
\caption{Quarterly Employee Outflow Network from Microsoft, by Median Tenure}
\floatfoot{\textit{Note:} figure depicts flow of employees leaving Microsoft one quarter before and after their RTO. One vertex corresponds to one destination company, with the size of the vertex scaled according to the median tenure of the employees leaving for that company. Destination company names for vertices in top 20\% of vertex sizes indicated by labels. }
\label{fig:network}
\end{figure}

Finally, we depict the network of companies that employees leaving Microsoft moved to in Figure \ref{fig:network}, for the quarter before and after its RTO. Each vertex indicates a company to which an employee moved. The vertex sizes correspond to the median tenure the newly hired employees had accumulated at Microsoft before leaving. Thus, the networks draw a picture of the outflow of seniority from Microsoft to its competitors. That way, one can observe a marked increase in the median tenure of employees hired at several large tech companies that compete directly with Microsoft for labor, such as Meta, Snap Inc., and Intel. In other words, after Microsoft's RTO, several senior employees appear to have moved to large, direct competitors that did not have an RTO mandate in place. We find causal support for these descriptive patterns by estimating a counterfactual decrease in employees' tenure and seniority at Microsoft, as well as an increase in the share of employees departing for larger companies instead of startups. 

\subsection{Estimation}

The descriptive patterns documented above can only provide correlational evidence, as they do not account for counterfactual changes that would have affected Microsoft's employee distribution in the absence of an RTO. In particular, one may expect companies to implement an RTO mandate because of pre-existing issues with employee retention, leading to endogeneity concerns. 
In order to estimate the counterfactual distribution, we use the distributional synthetic controls estimator from \citet{gunsilius2023distributional}. The idea is to extend the classical synthetic controls method  \citep{abadie2003economic, abadie2010synthetic, abadie2021using} to settings where the policy change of interest is implemented at an aggregate level, but the researcher has access to more granular data. This is exactly our setting, where the level of the intervention is at the company level, but the heterogeneity of the treatment is of fundamental interest. The distributional synthetic controls estimator is ideally suited for this setting as it estimates the heterogeneous treatment effect for the entire company and not an individual treatment effect for each unique worker. In particular, the method does not make the common assumption that individuals can be tracked over time. Relaxing this assumption allows for changes in the composition of workers due to both entry and exit into the company, which is a crucial aspect of within-firm employee distributions. 

We briefly review the idea behind distributional synthetic controls. For more detailed information we refer to \citet{gunsilius2023distributional} and the R implementation \citep{discosR}. Denote by $F_{jt,N}(y)$ the cumulative distribution function of the outcome of interest $Y$---in our case seniority or tenure---for company $j$ at time $t$ in the absence of the intervention. The distribution in the presence of intervention is denoted by $F_{jt,I}$. We write $t^*$ for the time point of the implementation of the policy. By construction, the observed distribution $F_{jt}$ coincides with $F_{jt,N}$ for all $t\leq t^*$. We denote the corresponding quantile function by 
\[F^{-1}_{jt}(q) = \inf_{y\in \mathbb{R}}\left\{F_{jt}(y)\geq q\right\},\quad q\in[0,1].\] In our setting the distributions $F_{jt}$ are estimated from individual level data $Y_{i(t)jt}$, where $i(t)$ is the individual $i$ at company $j$ at time $t$. Individual observations are only used to estimate the distributions $\hat{F}_{jtn}$ and corresponding quantiles $\hat{F}^{-1}_{jtn}$, where $n$ is the number of cross-sectional observations in time period $t$, which is allowed to vary with both $t$ and $j$.

In the synthetic controls setting we pick a target company $j=0$ that at a point $T^*$ implements the RTO mandate. We focus on Microsoft as our target company but replicate our main results for SpaceX and Apple. We only use companies as potential controls that had not implemented a form of RTO six months after the end of the sample period, or equivalently nine months after the target company's RTO. Our goal is to estimate the counterfactual distribution $F_{0t,N}$ for time periods $t > T^*$ to see what would have happened to the outcome distribution of the target company had it not implemented the RTO mandate. 

We have two main outcomes  of interest, tenure at the company in number of days and seniority in terms of an employee's title. We treat tenure as a continuous variable. Therefore, we estimate the counterfactual tenure distribution by obtaining the optimal weights $\vec{\lambda}^*_t = \left(\lambda_{1t},\ldots, \lambda_{Jt}\right)$ in each time period $t\leq t^*$ by solving 
\[\vec{\lambda}^*_t = \argmin_{\lambda\in \Delta^J} \int_{0}^1 \left\lvert \sum_{j=1}^J \lambda_{jt} F^{-1}_{jt}(q) - F^{-1}_{0t}(q)\right\rvert^2 \dd q,\]
forming one set of time-averaged weights $\vec{\lambda}^* = \frac{1}{t^*}\sum_{t=0}^{t^*} \vec{\lambda}_t^*$, and constructing the counterfactual quantile function for the time periods $t>t^*$ via 
\[F^{-1}_{0t,N}(q) = \sum_{j=1}^J \lambda_j F^{-1}_{jt}(q).\] Here, $\Delta^J = \left\{(\lambda_{1t},\ldots,\lambda_{Jt})\in\mathbb{R}^j: \lambda_{1t},\ldots, \lambda_{Jt}\geq0, \quad \sum_{j=1}^J\lambda_j=1\right\}$ is the probability unit simplex in $\mathbb{R}^J$. The other outcome variable is seniority, which is a discrete variable with $10$ ordinal values. Since the outcome is ordinal and not cardinal, it makes sense to use mixtures of distributions instead of mixtures of quantiles, as distributions preserve the ordinal structure; therefore, in line with the arguments in \citet{gunsilius2023distributional}, we compute the optimal weights $\vec{\lambda}^*_t$ in this case via
\begin{equation} \label{eq:main_discrete} \vec{\lambda}^*_t = \argmin_{\lambda\in \Delta^J} \int_{\mathbb{R}} \left\lvert \sum_{j=1}^J \lambda_{jt} F_{jt}(y) - F_{0t}(y)\right\rvert \dd y,\end{equation} and construct the counterfactual distribution in $t>t^*$ via
\[F_{0t,N}(y) = \sum_{j=1}^J\lambda_j F_{jt}(y).\] Additionally, we also consider the share of leavers with a certain characteristic. 

To obtain confidence intervals, we rely on the bootstrap. The bootstrap algorithm we use as well as conditions for its uniform convergence are provided in Appendix \ref{app:bootstrap}. We report uniform confidence bands throughout to account for simultaneous inference.

Finally, we perform inference via a permutation test, analogous to the classical setting \citep{abadie2010synthetic}. To account for suboptimal fits in the replications of the distributions in the pre-treatment periods, we augment the proposed test in \citet{gunsilius2023distributional} and construct the permutation test as the ratio of post-to-pre-intervention distances following \citet{abadie2010synthetic, abadie2021using} by computing the statistic
\begin{equation} 
r_j = \frac{R_j(T^*+1,T)}{R_j(1,T^*)} \label{eq:permutation}
\end{equation} 
when treating company $j$ as the target and calculating the $p$-value for the permutation test as
\[p=\frac{1}{J+1}\sum_{j=0}^J H(r_j-r_0),\] where $H(x)$ is the Heaviside function, which takes the value $1$ if $x\geq0$ and $0$ if $x<0$. Here, 
\[R_j(t_1,t_2) = \left(\frac{1}{t_2-t_1+1}\sum_{t=t_1}^{t_2} d_{t}^2\right)^{1/2}\] is the root mean squared prediction error in the distance $d_t$, which we take to be the $2$-Wasserstein distance
\[d^2_t = \int_{q_{\min}}^{q_{\max}} \left\lvert \sum_{j=1}^J \lambda_j^* F^{-1}_{jt}(q) - F^{-1}_{0t}(q)\right\rvert^2\dd q\] between the predicted distribution using the optimal weights $\lambda^*_j$ and the target. $q_{\min}$ and $q_{\max}$ are set equal to $0$ and $1$, respectively, for permutation tests on the full distribution, but can be restricted to a smaller interval to test inference on parts of the distribution as well.

\section{Results} \label{sec:results}
\subsection{The causal effects of an RTO on tenure and seniority}
Using the method described above, we estimate synthetic counterfactual tenure and title distributions for Microsoft. For the tenure distribution, we restrict the distribution to the bottom 90 quantiles as the top 10 quantiles are very sparse, leading to unstable estimates. In Table \ref{tab:weights}, we report the top 10 largest weights each of the control units' distributions receive, together with the name of the control unit. Most of the companies receiving the largest weights are highly similar to Microsoft, being large diversified tech firms that produce both hardware and software, such as Amazon, Dell, Motorola, and Cisco. 
\begin{table}
\centering
\begin{subtable}{0.45\textwidth}
\begin{tabular}{lr}
  \hline
Company & Weight \\ 
  \hline
Amazon & 0.1925 \\ 
  Autodesk & 0.1282 \\ 
  Dell Technologies & 0.1098 \\ 
  Slalom Consulting & 0.0884 \\ 
  Motorola Solutions & 0.0859 \\ 
  3m & 0.0805 \\ 
  Cisco & 0.0674 \\ 
  Docusign & 0.0497 \\ 
  Citrix & 0.0364 \\ 
  Cox Automotive Inc. & 0.0328 \\ 
   \hline
\end{tabular}

\caption{Tenure}
\end{subtable}
\begin{subtable}{0.45\textwidth}
\begin{tabular}{lr}
  \hline
Company & Weight \\ 
  \hline
Protiviti & 0.3185 \\ 
  Linkedin & 0.2054 \\ 
  Intuit & 0.1753 \\ 
  Dell Technologies & 0.1547 \\ 
  Cisco & 0.0567 \\ 
  Deloitte & 0.0469 \\ 
  Intel Corporation & 0.0422 \\ 
  Hp & 0.0003 \\ 
  Broadridge & 0.0000 \\ 
  Nvidia & 0.0000 \\ 
   \hline
\end{tabular}

\caption{Titles}
\end{subtable}
\caption{Top 10 Synthetic Controls Weights: Microsoft}
\floatfoot{\textit{Note}: table shows names of 10 companies with largest synthetic controls weights for Microsoft's RTO results shown in Figures \ref{fig:main_tenure} and \ref{fig:main_titles}.}
\label{tab:weights}
\end{table}

With these weights in hand, we construct a distributional synthetic control for Microsoft. We find that the implementation of a return-to-office mandate in April 2022 led to significant employee outflows at Microsoft, with more senior staff leaving at higher rates. Figure \ref{fig:main_tenure} depicts this through the counterfactual changes in employee tenure. 

\begin{figure}[htbp!]
\begin{subfigure}[b]{0.49\textwidth}
\includegraphics[width=\textwidth]{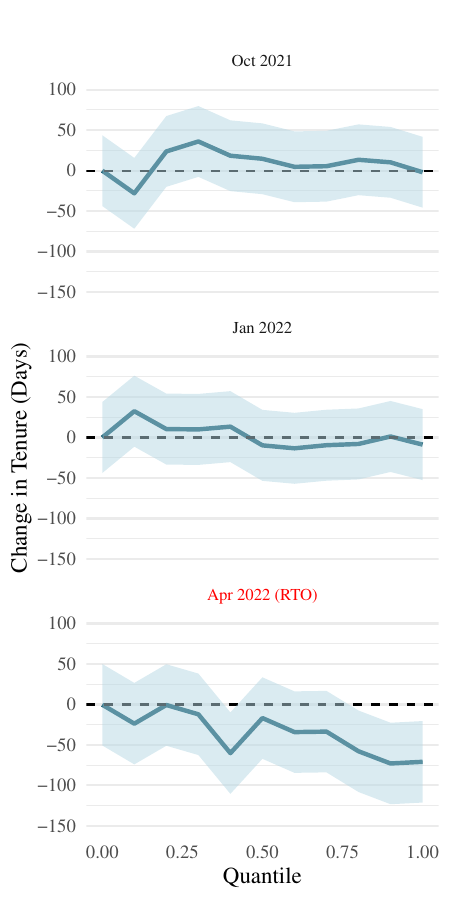}
\caption{Continuous}
\end{subfigure}
\begin{subfigure}[b]{0.49\textwidth}
\includegraphics[width=\textwidth]{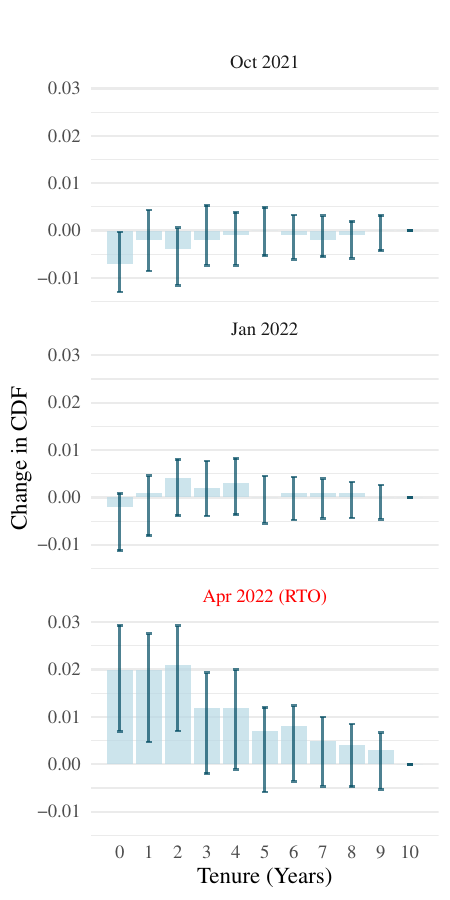}
\caption{Discretized}
\end{subfigure}
\caption{Tenure at Microsoft After RTO}
\floatfoot{\textit{Note}: \textbf{(a)}: counterfactual change in quantile functions of tenure in days at Microsoft for two quarters before its RTO and one quarter after; \textbf{(b)} counterfactual change in CDFs of tenure in years for the same. Facet titles indicate first month of respective quarters. Estimated change is depicted by dark blue solid line and 95\% bootstrapped uniform confidence intervals (see Appendix \ref{app:bootstrap}) by light blue shaded area (1,000 bootstraps). Permutation test p-values are (see \eqref{eq:permutation}): left panel 0.03, right panel 0.06. N=1,141,518, J+1=32.}
\label{fig:main_tenure}
\end{figure}

For the left panel, the y-axis captures the difference between the observed distribution of tenure in number of days and the counterfactual one, estimated using the distributional synthetic controls method. The x-axis represents quantiles of these distributions. The synthetic control can accurately reconstruct the pre-RTO tenure distribution, as the difference between the two distributions is close to and not significantly different from zero at all quantile grid points. Following the return-to-office (RTO) mandate, we estimate a reduction in tenure among Microsoft's employees that escalates with tenure duration, and only becomes statistically significant for the top 2 deciles. With pointwise confidence bands, this becomes the top 5 deciles (not reported). In terms of magnitudes, tenure at Microsoft decreased by two months at the top deciles. Our permutation test supports the validity of this inference, with a p-value of 0.03. 

For an alternative perspective, we discretize the tenure measure into years and estimate the counterfactual change in the CDF of this discrete variable using the mixture of distributions approach in \eqref{eq:main_discrete}. Thus, we can interpret the y-axis as capturing the increase in probability mass at a given year of tenure. The resulting estimates are shown in the left panel of Figure \ref{fig:main_tenure}. We replicate the pre-treatment distributions nearly perfectly. After treatment, we estimate a 6\% increase in the probability mass of employees that have 0 to 3 years of tenure, corresponding to a relative decrease in the share of employees with longer tenure. Again, the permutation p-value of 0.06 supports the validity of the inference.

In Figure \ref{fig:main_titles}, we consider a more substantively different measure of the employee distribution: seniority. The figure depicts an increase in the mass of workers at lower-ranking titles at Microsoft after its RTO, using the mixture of distributions approach in \eqref{eq:main_discrete}. We estimate a statistically significant increase in the mass of employees in training, entry, and manager positions of over 4\%. The estimator replicates the pre-treatment distribution well, with the confidence intervals including zero everywhere. We obtain nearly identical results when restricting the distribution to only the senior level and below, where most of the mass is concentrated. The permutation test once again supports the validity of the inference, with a p-value of 0.03. 

\begin{figure}[ht!]
\includegraphics[width=0.9\textwidth]{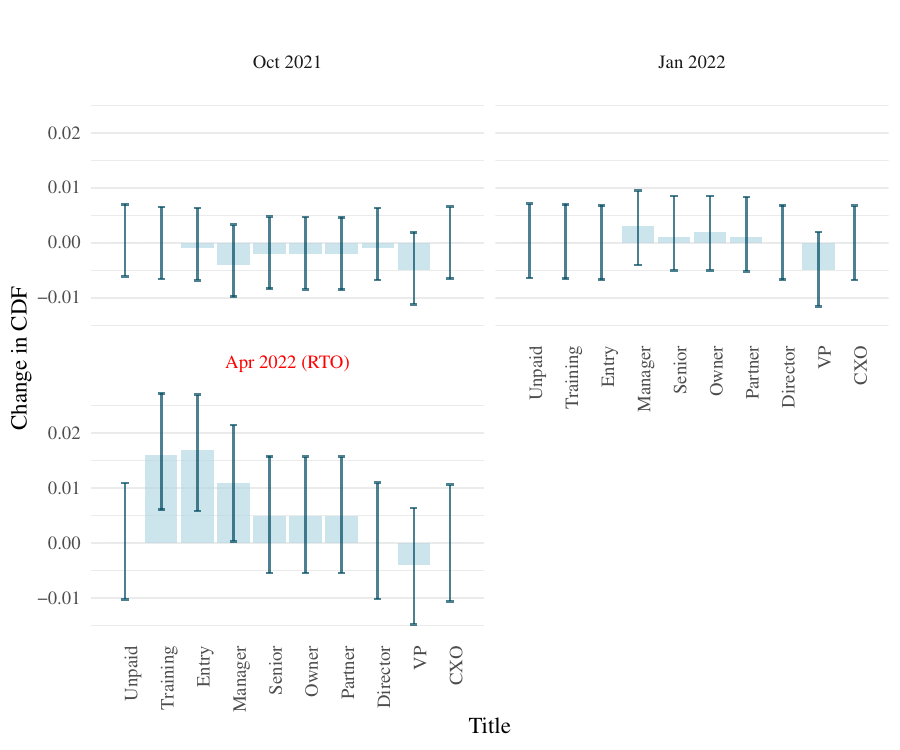}
\caption{Titles at Microsoft After RTO}
\floatfoot{\textit{Note}: figure shows counterfactual change in cumulative density functions of title distribution at Microsoft for two quarters before its return to office (RTO) and one quarter after. Facet titles indicate first month of respective quarters. Estimated change is depicted by light blue solid bar and 95\% bootstrapped uniform confidence intervals (see Appendix \ref{app:bootstrap}) by dark blue whiskers (1,000 bootstraps). Permutation test p-value (see \eqref{eq:permutation}) is 0.03. N=1,141,518; J+1=32.}
\label{fig:main_titles}
\end{figure}

We replicate these results almost identically for Apple and SpaceX in Figure \ref{fig:title_robustness}. The result for the continuous tenure distribution does not replicate, likely because we observe fewer resumes at these companies, which makes it harder to accurately estimate the continuous tenure distribution. This explanation is supported by the bad pre-treatment fit of the tenure distribution for these companies. However, when we discretize it into 5-year bins, we estimate an increase in the mass of workers with less than 5 years of tenure of 1.5\%. To replicate the result for the title distribution, we focus on the titles below the senior level at these companies, as there is very little mass at the higher titles.\footnote{The results for Microsoft replicate identically when restricting the distribution in this way.} Interestingly, the magnitude of the results appears to increase with the stringency of the RTO, with Apple seeing a leftward shift in mass of under 4\% and no significant effects for managers, while up to 15\% of the SpaceX distribution shifts leftward. This correlates with these companies mandating 1 and 5 days of office attendance each week, respectively, compared to Microsoft's 50\%.

Several points merit further discussion. First, the distribution of some of our control units may be affected by spillover effects, if one of the employees leaving Microsoft takes up employment at a control unit. We do not believe this will meaningfully bias our estimates, as only 5\% of employees leave a company in a given quarter on average. Of those, we typically see only a handful of these leaving for the same, usually large, firm. Furthermore, we restricted our control sample to companies with at least 5\% of the number of employees observed at Microsoft. Together, these facts imply that the spillover effects from control companies absorbing employees leaving the treated company are negligible relative to the effects on Microsoft. 

Second, there could be anticipation effects that bias the magnitude of our estimates. For all three companies we consider, however, the RTO mandates were announced less than 2 months prior to their implementation.\footnote{Microsoft announced plans for a phased return to the office in early February 2022 with mandatory office presence from April 2022 onwards \citep{Tilley_Cutter_2022}. Apple announced its RTO mandate, which took effect in April 2022, in early March \citep{haring2022}; while SpaceX's RTO was announced with immediate effect via an internal email from Elon Musk to employees \citep{mac2022}.} Given that a typical job search takes around 4.5 months on average \citep{BLS2024} and resignations usually only take effect at the end of the month, this left essentially no room for anticipation that is not captured by our research design. Of course, employees may have heard rumors before the official announcement. Both Microsoft and Apple, however, delayed several prior RTOs indefinitely due to pushback from employees and pandemic conditions \citep{OLoughlin2023, OLoughlin2022}. Survey evidence from May 2022 indeed suggested that such pushback to office attendance requirements had been common and effective \citep{bloom2022greatresistance}. Thus, it seems likely that employees only acted on the mandates once they were actually implemented due to the uncertainty surrounding them as well as their previous successes in pushing back the dates. This does not contradict the possibility that employees planning to leave had already started their job search before the actual RTO, which would explain the sizeable tenure effects we observe only 3 months after implementation.

Third, due to the interference of layoff dates with RTO mandates, we are only able to study these mandates' effects one quarter after their implementation. It is possible that firms try to mitigate the shift in their tenure distribution after an RTO by increased hiring or internal promotions. This does not negate the significant costs associated with this shift and any mitigation thereof, which we discuss in more detail below.

\subsection{Where Do Employees Go After an RTO?}

Our main results provide robust evidence that long-tenured and senior employees depart from Microsoft at higher rates after it implements a return to office mandate. A natural follow-up question to ask is: where do these senior employees go? To answer this question, we construct a set of distributional synthetic controls for the employees who leave Microsoft in a given quarter. Since the leavers comprise a far smaller sample than the total employees, we focus on replicating an aggregate distributional statistic \citep{abadie2010synthetic}, the share of leavers who have characteristic X, rather than the entire distribution. 

This way, we find that Microsoft's RTO led to a counterfactual decrease in the share of its employees departing for startups, as shown in Figure \ref{fig:startups}. In particular, we proxy for whether a firm is a start-up in two ways: whether it has less than 50 employees, and whether it is below the 5th quantile in terms of its industry labor share. For both measures, we estimate an increase in the mass of workers leaving to non-startups of around 5\% that is significant at the 95\% level. Moreover, the permutation test for both measures indicates the inference is (marginally) valid, with p-values of 0.03 and 0.09, respectively. Combined with our main finding that more senior employees are leaving at higher rates, this result confirms the descriptive evidence in Figure \ref{fig:network} that the RTO caused a departure of senior employees to direct, large competitors. Thus, Microsoft's RTO appears to not only have had important human capital costs due to the hiring and training costs required to replace the senior employees lost, but also potential competitive side effects due to these employees taking operational knowledge with them.

\begin{figure}[ht!]
\begin{subfigure}[b]{0.49\textwidth}
\includegraphics[width=\textwidth]{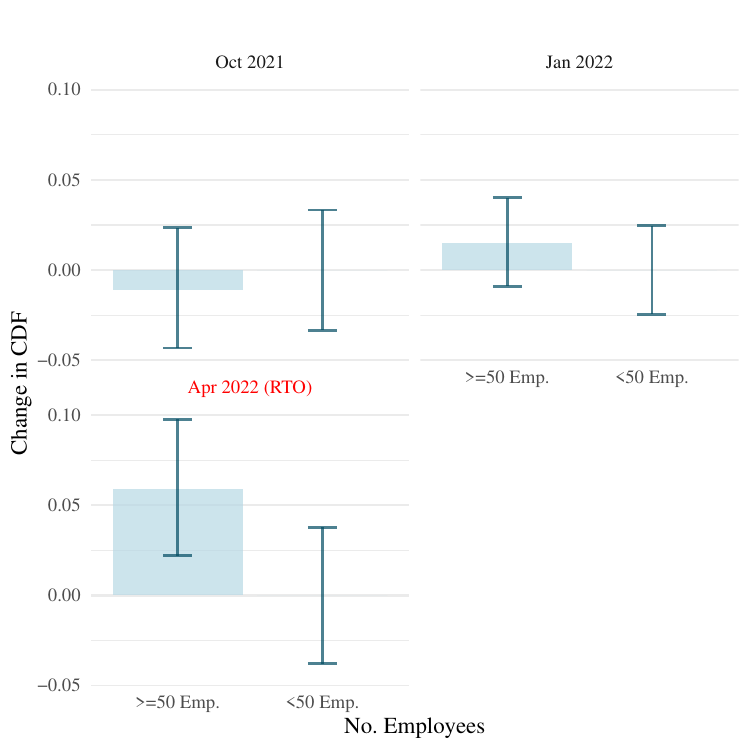}
\end{subfigure}
\begin{subfigure}[b]{0.49\textwidth}
\includegraphics[width=\textwidth]{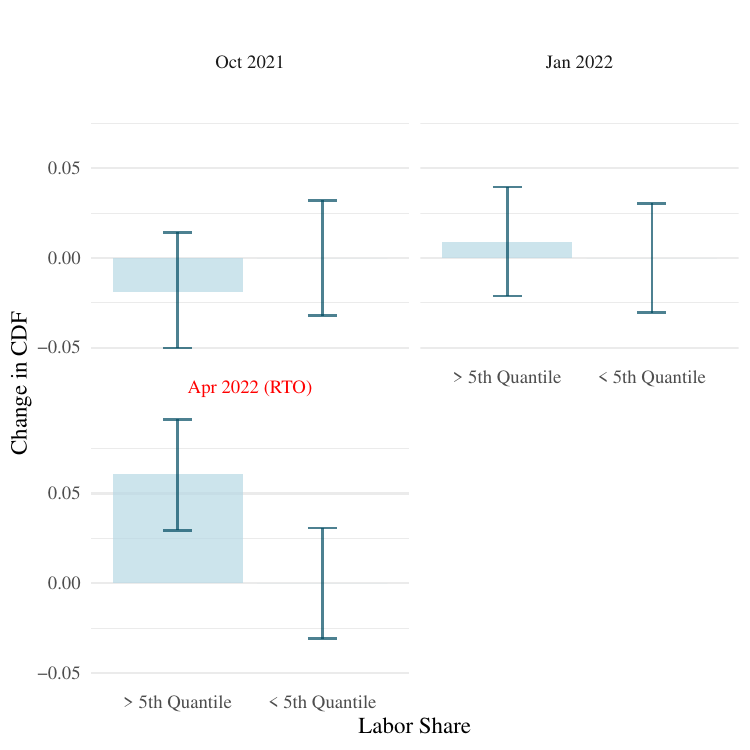}
\end{subfigure}
\caption{Share of Employees Leaving to Startups}
\floatfoot{\textit{Note}: figure shows counterfactual change in share of employees leaving Microsoft by size of the company they leave to, measured by number of employees (larger or smaller than 50, left panel) and the company's labor share (number of employees as a share of total industry employment, right panel). Estimated change is depicted by light blue solid bar and 95\% bootstrapped uniform confidence intervals (see Appendix \ref{app:bootstrap}) by dark blue whiskers (1,000 bootstraps). Pre-treatment fits are depicted in the top two bar plots, while the post-treatment counterfactual is shown in the bottom box plot. Permutation p-values (see \eqref{eq:permutation}) are 0.03 and 0.09, respectively. N=1,846, J+1=32.}
\label{fig:startups}
\end{figure}

Finally, in Figure \ref{fig:null_results} we further test for, but fail to find any evidence of, distributional differences in: the share of men vs. women leaving Microsoft; the share of leavers that flow into unemployment; the share of leavers that accept a demotion in their title at their new job; and the share of leavers changing roles at their new job. The lack of gender differences is interesting insofar as it suggests women are not acting on their stronger preference for working from home that was documented in several surveys \citep{yougov2022working, flexjobs2021menwomen}. The lack of counterfactual changes in unemployment status, demotions, and job roles suggests that the senior employees leaving Microsoft have good outside options, in the sense that they appear to find employment in similar roles and levels of seniority. This aligns with the fact that Microsoft implemented its RTO mandate much earlier than most tech companies, and we can expect the value of employees' outside option to shrink as more companies implement RTOs. At the same time, this finding also suggests that the share of companies in an industry that have returned to office will determine the negative impact of an RTO mandate on the company that implements it. That is because, all else equal, we can expect that the outflow of senior employees that we have documented will be larger when these employees' outside options are better. This hints at the existence of a tipping point where companies implement RTO mandates at increasing rates when the value of employees' outside options has diminished sufficiently. As a counterpoint, many companies started explicitly coding the degree of hybrid work allowed into new joiners' contracts, which may generate path dependence both for fully remote companies as well as those that returned to the office. Exploring these dynamics could be a fruitful avenue for future research. 

\subsection{Discussion}

This article estimates that a return to office mandate at Microsoft led to a significant outflow of senior employees to large competitors. What does this suggest about the larger ramifications of such mandates for the companies that implement them? 
To start, consider the reasons for implementing an RTO mandate. Managers tend to believe work-from-home decreases productivity \citep{Bloom_Barrero_Davis_Meyer_Mihaylov_2023} and diffusion of knowledge within the firm \citep{barrero_evolution_2023}, and face difficulties monitoring employees in remote-work settings \citep{Microsoft2022HybridWork}. The academic literature provides support for some, but not all, of these concerns -- see \citet{barrero_evolution_2023} for an overview. On the other hand, employee satisfaction and retention have been found to decrease with a return to office \citep{bloom_does_2015}. Our results suggest that such retention issues may be more serious than previously thought, as senior employees and, indeed, the (senior) \textit{managers themselves} are leaving. Thus, a firm returning to the office with the aim of increasing productivity or innovation may end up doing so in a narrow sense, improving the output metrics of those who stay, but harming them for the company as a whole.

In support of this, the academic literature provides ample evidence on the value of senior and long-tenured employees to the companies they work at. In broad terms, human capital has been theorized to be a key factor determining firms' competitive advantage and estimated to be strongly related to firm performance \citep{crook2011does}. More specifically, worker output has been found to be monotonically increasing in tenure \citep{shaw2008tenure}, with bosses estimated to be almost two times as productive as the average worker \citep{lazear2015value}. These effects are driven by (costly) employee training \citep{de2012effects, konings2015impact} and the more general accrual of firm-specific human capital with tenure \citep{becker1962investment}. Moreover, senior employees tend to not just \textit{be} productive; they also strongly affect a firm's wider productivity. Managers and management practices have been found to explain a large share of the variation in firm productivity \citep{bender2018management, metcalfe2023managers, fenizia2022managers}, especially in combination with higher-quality human capital \citep{bender2018management}. Employees working in research and development (R\&D) not only increase a firm's innovation rates, but also take their accumulated knowledge with them when they move to competitors \citep{palomeras2010markets, cassiman2006search}. Relatedly, lower employee turnover has been found to increase firm investment and decrease new firm entry \citep{jeffers2024impact}. Replacing skilled employees also incurs substantial hiring costs, equal to around 2--4 months of pay, that increase with the skill level \citep{blatter2012costs}. 

Taken together, this large body of literature implies that increased attrition of senior employees can substantially impede firm output, productivity, innovation, and competitiveness. This provides an important perspective on the implications of the return to office for the broader functioning of a company. Moreover, the downstream outcomes of the ``uneven attrition'' we document may be a fruitful avenue for future research.

\section{Conclusion} \label{sec:conclusion}

The revolution in the workplace that was brought on by the COVID-19 pandemic continues to reverberate across offices globally, even as the pandemic itself retreats into history. While some companies have returned to the office, many others have gone fully remote. This paper has provided novel evidence on the implications of this bifurcation for the distribution of employees across and within companies. In particular, we provide causal evidence that three of the largest US tech companies -- Microsoft, SpaceX, and Apple -- faced a significant outflow of employees after implementing an RTO mandate, with more senior employees leaving at higher rates. We do this using a distributional extension of the classical synthetic controls estimator \citep{abadie2003economic, gunsilius2023distributional}, for which we provide conditions for the uniform validity of the bootstrap. Moreover, we provide both descriptive and causal evidence that these senior employees left for larger companies at higher rates than usual. Taken together, our findings imply that return to office mandates can imply significant human capital costs in terms of output, productivity, innovation, and competitiveness for the companies that implement them. Furthermore, the sorting of skilled labor across companies that our findings imply can have important consequences for the employment landscape, which we leave to future work. 

\bibliographystyle{agsm}
\bibliography{references}

\newpage
\appendix

\section{Additional Results}

\subsection{Figures}

\begin{figure}[htbp!]
\begin{subfigure}[b]{0.49\textwidth}
    \includegraphics[width=\textwidth]{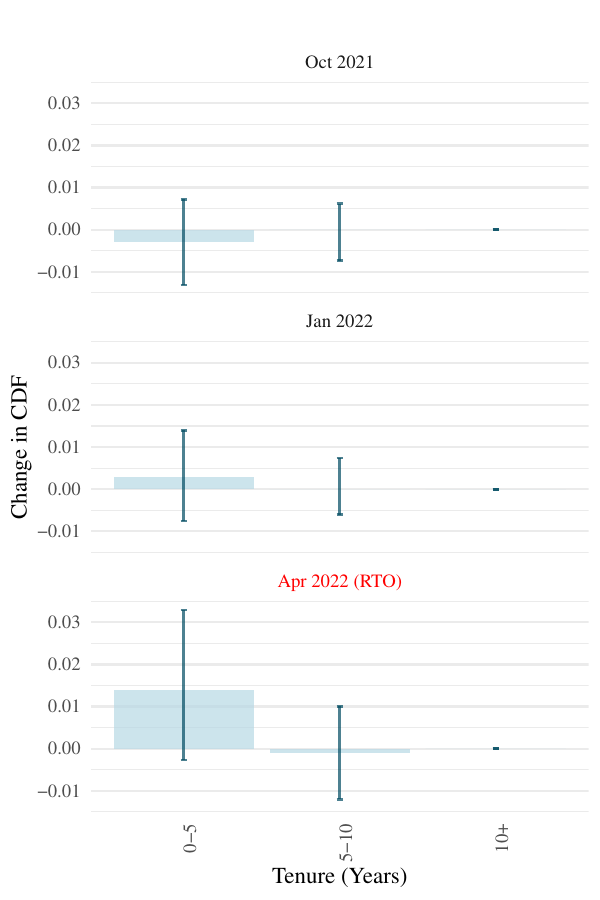}
\caption{Apple}
\end{subfigure}
\begin{subfigure}[b]{0.49\textwidth}
\includegraphics[width=\textwidth]{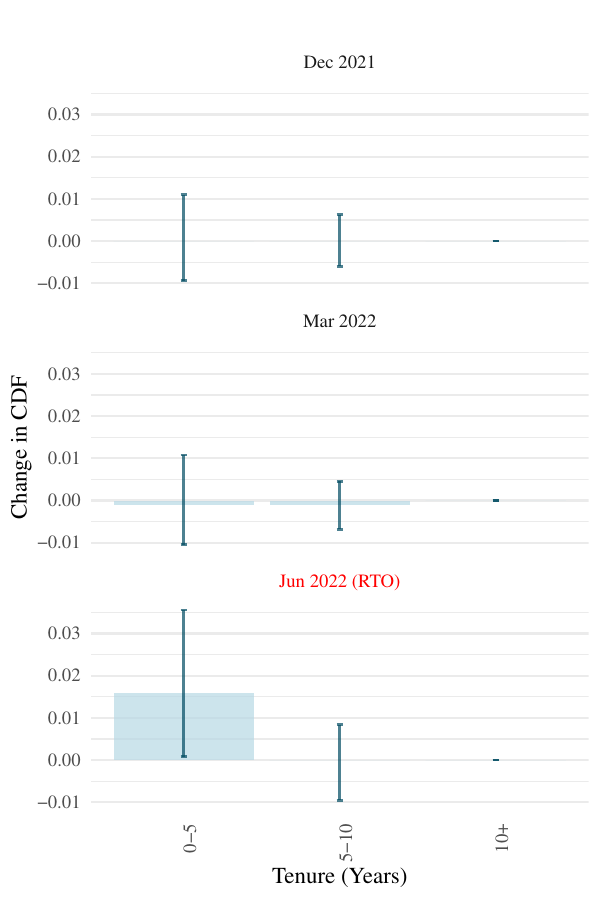}
\caption{SpaceX}
\end{subfigure}
\caption{Tenure After RTO, Robustness}
\label{fig:tenure_robustness}
\floatfoot{\textit{Note}: figure shows counterfactual change in cumulative density functions of seniority levels at SpaceX and Microsoft for two quarters before its return to office (RTO) and one quarter after. Estimated change is depicted by light blue solid bar and 95\% bootstrapped uniform confidence intervals (see Appendix \ref{app:bootstrap}) by dark blue whiskers (1,000 bootstraps). Permutation test p-values (see \eqref{eq:permutation}) are: Apple 0.25, SpaceX 0.04. Apple: N=1,080,159, J+1=44; SpaceX: N=935,060, J+1=220.}
\end{figure}

\begin{figure}[htbp!]
\begin{subfigure}[b]{0.49\textwidth}
    \includegraphics[width=\textwidth]{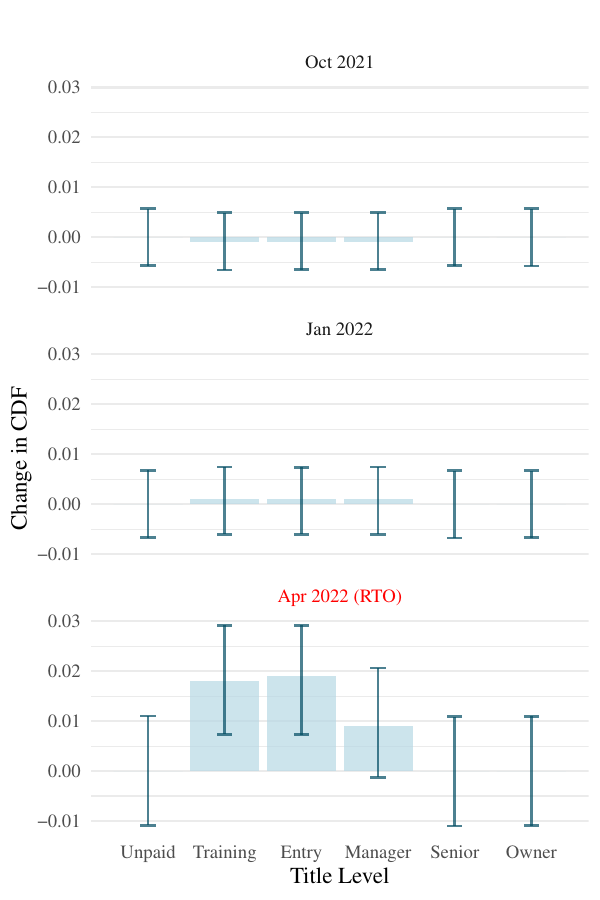}
\caption{Apple}
\end{subfigure}
\begin{subfigure}[b]{0.49\textwidth}
\includegraphics[width=\textwidth]{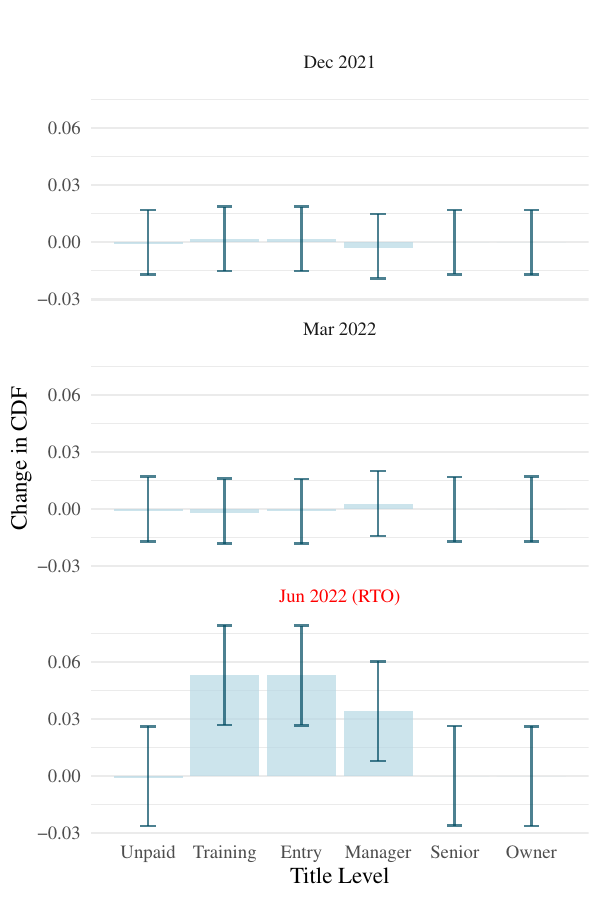}
\caption{SpaceX}
\end{subfigure}
\caption{Titles After RTO, Robustness}
\label{fig:title_robustness}
\floatfoot{\textit{Note}:  Estimated change is depicted by light blue solid bar and 95\% bootstrapped uniform confidence intervals (see Appendix \ref{app:bootstrap}) by dark blue whiskers (1,000 bootstraps). Permutation test p-values (see \eqref{eq:permutation}) are: Apple 0.07, SpaceX 0.05. Apple: N=1,080,159, J+1=44; SpaceX: N=935,060, J+1=220.}
\end{figure}

\begin{figure}[hb!]
\centering
\begin{subfigure}[b]{0.49\textwidth}
    \includegraphics[width=\textwidth]{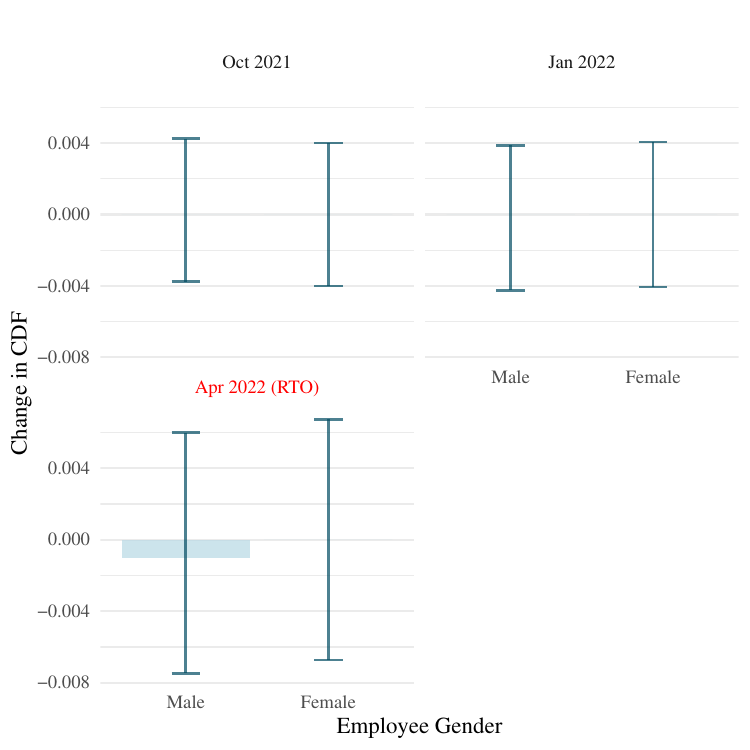}
\caption{Gender}
\end{subfigure}
\begin{subfigure}[b]{0.49\textwidth}
\includegraphics[width=\textwidth]{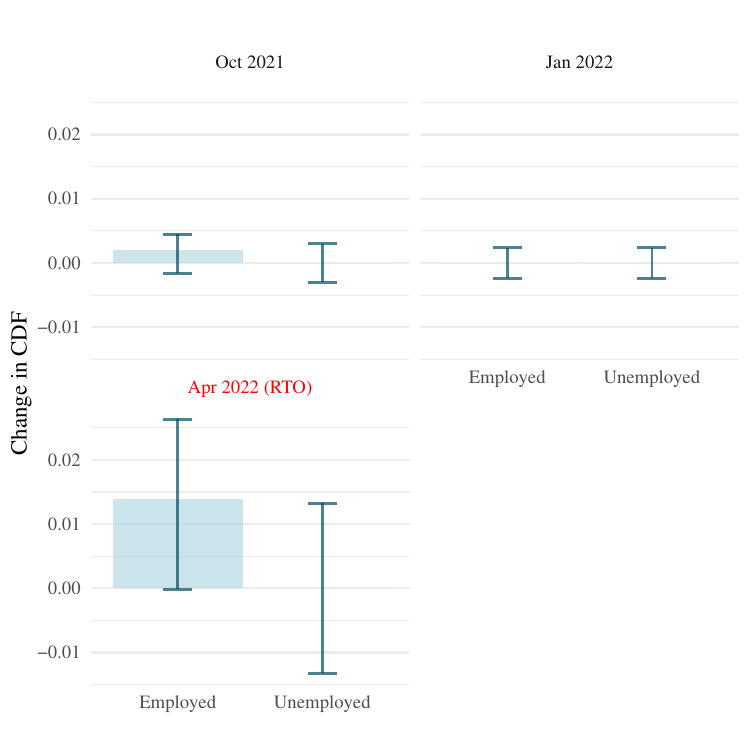}
\caption{New Employment Status}
\end{subfigure}
\begin{subfigure}[b]{0.49\textwidth}
\includegraphics[width=\textwidth]{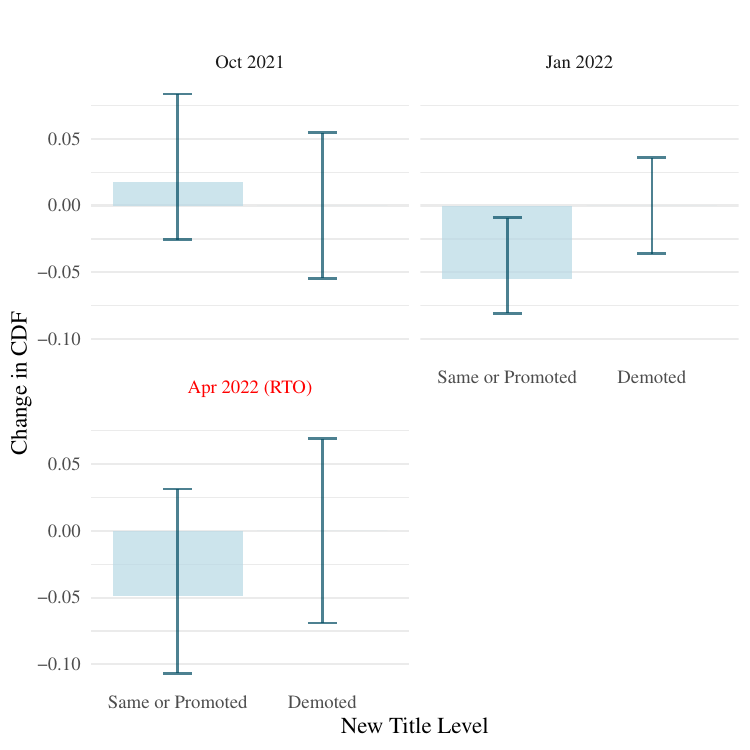}
\caption{Demotions}
\end{subfigure}
\begin{subfigure}[b]{0.49\textwidth}
\includegraphics[width=\textwidth]{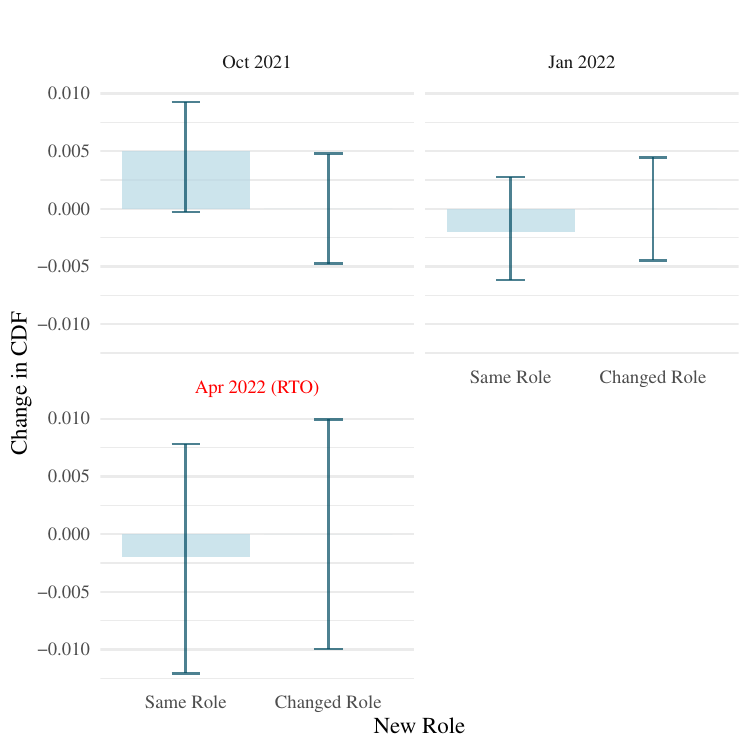}
\caption{Role Change}
\end{subfigure}
\caption{Gender, Employment, Demotions, and Role Changes Among Employees Leaving Microsoft}
\floatfoot{\textit{Note}: Figure shows counterfactual change in share of employees leaving Microsoft by gender, new employment status, whether they accepted a demotion in their new job, and whether they changed roles in their new job.  Estimated change is depicted by light blue solid bar and 95\% bootstrapped uniform confidence intervals (see Appendix \ref{app:bootstrap}) by dark blue whiskers (1,000 bootstraps). Pre-treatment fits are depicted in the top two bar plots, while the post-treatment counterfactual is shown in the third bar plot in each graph. Permutation test p-values (see \eqref{eq:permutation}) are: gender 0.44, employment status 0.41, demotions 0.69, role change 0.88.}
\label{fig:null_results}
\end{figure}

\clearpage

\section{Bootstrapping confidence intervals for the constructed quantile functions} \label{app:bootstrap}
In this section we provide assumptions on the data under which a canonical bootstrap procedure is valid for obtaining statistical uniform confidence intervals on the constructed quantile functions 
\begin{equation}\label{eq:count}
F^{-1}_{0t,N} \coloneqq \sum_{j=1}^J \lambda^*_j F^{-1}_{jt},
\end{equation} 
where $\lambda^*_j$ are the optimal weights \citep{gunsilius2023distributional}. In the case where $t\leq t^*$, this is the ``replicated'' quantile function of the observable $F^{-1}_{0t}$; in the case $t>t^*$, it is the counterfactual quantile function. The bootstrap procedure is captured in Algorithm \ref{algo:sub}. 

\begin{algorithm}[htbp!]
\caption{The Bootstrap for Distributional Synthetic Controls}
\label{algo:sub}
\begin{algorithmic}[1]
\STATE \textbf{Input}: Outcome data $\left\{ \left(Y_{ijt}\right)_{i=1,\ldots,N_{jt}} \right\}_{j=0,\ldots,J;\: t=t,\ldots,T},$ quantile grid $\mathcal{Q} \coloneqq \{ \frac{q}{G} : q = 0,\ldots, G \}$,  empirical quantile functions $\hat{\mathbf{F}}^{-1}_{tn} \coloneqq \left(\hat{F}_{jtn}^{-1}(q) : q \in \mathcal{Q} \right)_{j=0,\ldots,J}$, index set $\mathcal{I}_{jt} \coloneqq \{0, \ldots, N_{jt}\}$, number of bootstrap draws $B$, significance level $\alpha \in(0,1)$, time-specific weights estimators $\hat{\vec{\lambda^*}}_{tn}\left(\hat{\mathbf{F}}^{-1}_{tn}\right) \coloneqq \left(\hat{\lambda}^*_{jtn}(\hat{F}_{jtn}^{-1})\right)_{j=1,\ldots,J}$, the derived average weights $\hat{\vec{\lambda^*_{n}}} \coloneqq (\hat{\lambda^*_{1n}},\ldots,\hat{\lambda^*_{Jn}}) \coloneqq \frac{1}{t^*}\sum_{t=1}^{t^*} \hat{\vec{\lambda^*}}_t$, and counterfactual quantile estimator $\hat{F}^{-1}_{0tn,N} \coloneqq \sum_{j=1}^J \hat{\vec{\lambda^*}}_{jn} \hat{F}^{-1}_{jtn} $
\STATE \textbf{Output:} Confidence bands for the average weights $\hat{\vec{\lambda}}^*_n$, counterfactual quantile function $\left\{\hat{F}^{-1}_{0tn,N}\right\}_{t=1,\ldots,T}$, and counterfactual quantile differences $\left\{\hat{F}^{-1}_{0tn} - \hat{F}^{-1}_{0tn,N}\right\}_{t=1,\ldots,T}$.
    \FOR{$b = 1$ to $B$} 
        \FOR{$t=1$ to $T$}
            \STATE For each $j=1,\ldots,J$, draw a new sample $\tilde{\mathcal{I}}^b_{jt}$ of size $N_{jt}$ without replacement from $\mathcal{I}_{jt}$ 
            \STATE Estimate the new empirical quantile functions $\tilde{\mathbf{F}}^{-1,b}_{tn}$ from the resampled data $\left\{\left(Y_{ijt}\right)_{i\in \tilde{\mathcal{I}}_{jt}} \right\}_{j=0,\ldots,J}$
            \STATE Compute the optimal time-specific weights $\tilde{\vec{\lambda^{*,b}}}_{tn}\left(\tilde{\mathbf{F}}^{-1,b}_{tn}\right)$ from the resampled quantile functions
        \ENDFOR
        \STATE Calculate the resampled average weights $\tilde{\vec{\lambda^{*,b}}}_{n}$ from the resampled time-specific weights
        \STATE Construct the counterfactual quantile function $\left\{\tilde{F}^{-1,b}_{0tn,N}\right\}_{t=1,\ldots,T}$ and quantile differences $\left\{\tilde{F}^{-1,b}_{0tn} - \tilde{F}^{-1,b}_{0t}\right\}_{t=1,\ldots,T}$ from the resampled average weights and quantile functions
    \ENDFOR
    \STATE Construct confidence intervals for each object of interest $\mathbf{X}$ and its corresponding estimate $\hat{\mathbf{X}}$,
    \[ 
    C(\alpha) \coloneqq \left[ \hat{\mathbf{X}} - t_{1-\alpha}, \hat{\mathbf{X}} + t_{1-\alpha}   \right],
    \] 
    where $t_{1-\alpha}$ is the $1-\alpha$ quantile of the maximum absolute difference between the bootstrapped estimates and the main estimate $\mathbf{X}^b$.
\end{algorithmic}
\end{algorithm}

In the following, to simplify notation, we assume that for all elements $j=0,\ldots, J$ and for all time periods $t$ we have the same amount of observations $n_{jt} = n$. This restriction serves only to save on notation and the results can be extended to the case where we allow for different number of observations $n_{jt}$ as long as they are asymptotically balanced, i.e., as long as $\frac{n_{ks}}{n_{jt}}\to\eta_{jkst}$ with $0<\eta_{jkst}<+\infty$ for all $j,k\in\{0,\ldots,J\}$ and all $s,t\leq T$. 

Formally, validity of the bootstrap means that the bootstrapped empirical process
\[\tilde{\mathbb{G}}_{n} = \sqrt{n}\left(\tilde{F}^{-1}_{0tn,N} - \hat{F}^{-1}_{0tn,N}\right) = \sqrt{n}\left(\sum_{j=1}^J \tilde{\lambda}_{jn}^*\tilde{F}^{-1}_{jtn} - \sum_{j=1}^J \hat{\lambda}_{jn}^*\hat{F}^{-1}_{jtn}\right),\] converges weakly to the standard empirical process
\[\mathbb{G}_{n} = \sqrt{n}\left(\hat{F}^{-1}_{0tn,N} - F^{-1}_{0tn}\right) = \sqrt{n}\left(\sum_{j=1}^J \hat{\lambda}_{jn}^*\hat{F}^{-1}_{jtn} - \sum_{j=1}^J \lambda^*_jF^{-1}_{jtn}\right),\]
where $\tilde{\lambda}_{jn}$ and $\tilde{F}^{-1}_{jtn}$ are the bootstrapped optimal weights and quantile functions; $\hat{\lambda}_n$ and $\hat{F}^{-1}_{jtn}$ are the estimated optimal weights and quantile functions; and $\lambda^*_j$ and $F^{-1}_{jt}$ are the optimal weights and quantile functions in the population. 

It is important to note that we can only show that the bootstrap is valid up to a null set in the Banach space $\bigtimes_{j=0}^J L^2([a,b])$ equipped with the Euclidean product topology. Since we work in infinite dimensional spaces, this null set is different from the classical ``almost everywhere'' statements, as an analogue to Lebesgue measure does not exist in infinite dimensional spaces. We rely on existing results showing that this is a null set in the sense that points of non-differentiability are in the $\sigma$-ideal $\lowoverset{\simCirc}{\mathcal{C}}$ generated by sets ``regularly null on curves running in the direction of some open cone'' \citep[Definition 1.1]{preiss2014gateaux}. For practical purposes this ``almost everywhere'' specification is not a restriction, as the data-generating processes that can make the bootstrap invalid will not occur. We require the following assumption on the data generating process, the first part of which is a standard assumption for bootstrap validity of quantile estimators \citep[Lemma 3.9.23]{van1996weak}. 
\begin{assumption}\label{ass:boot}
For all $j=0,\ldots, J$ and all time periods $t\leq T$, we have the same number of observations $n$ and all observations $Y_{ijt}$ are iid draws from the distribution $F_{jt}$. In addition, the data-generating process is such that the corresponding distribution functions $F_{jt}$ either 
\begin{enumerate}
    \item[(i)] have compact support $[a,b]$ and are continuously differentiable with strictly positive density or 
    \item[(ii)] have infinite support and for every $0<q_1< q_2<1$ there exists an $\varepsilon>0$ such that $F_{jt}$ is continuously differentiable on the interval $[a,b]\equiv [F^{-1}_{jt}(q_1)-\varepsilon, F^{-1}_{jt}(q_2)+\varepsilon]$ with positive density $f$.
\end{enumerate}
Moreover, the $J\times J$ Hessian matrix $\mathbf{H}_t$ with $jk$-th entry 
\[\mathbf{H}_t(j,k) = \int F_{kt}^{-1}(q)F_{jt}^{-1}(q)\dd q\] is strongly positive definite for all $t$ in the sense that there exists $\alpha>0$ such that
\[\langle \mathbf{H}_t x, x\rangle\geq \alpha\|x\|_2^2\] for any $x\in\mathbb{R}^J$, where $\|\cdot\|_2$ denotes the Euclidean norm.
\end{assumption}
This assumption allows us to prove validity of the bootstrap for obtaining confidence intervals on the constructed quantile function; in the compact support case we can do this for the entire quantile range $[0,1]$, while in the case of unbounded support we can only do it for any compact subinterval of the open interval $(0,1)$. 
One can interpret the assumption on the matrix $\mathbf{H}_t$ as essentially a ``variance'' condition. It requires that there is enough variation in the quantile functions and mirrors the classical variance condition for the invertibility of the design matrix in classical linear regression. 

The following proposition is key to proving bootstrap validity. It proves Hadamard differentiability of the constructed $F_{0t,N}^{-1}$ based on the observable quantile functions $F^{-1}_{jt}$. Hadamard differentiability lets us apply the Delta method \citep[chapter 3.9]{van1996weak} to derive bootstrap validity. Since $F^{-1}_{0t,N}$ is a function of $\mathbf{F}_t^{-1}$, we work in the product spaces $\bigtimes_{j=0}^J L^2([a,b])$ and $\mathbf{C}[a,b]\coloneqq \bigtimes_{j=0}^J C[a,b]$, which we equip with the standard Euclidean product norm. Note that we assume without loss of generality that all quantiles have the same support $[a,b]$; this can of course be relaxed at the cost of a significant increase in notational burden.

\begin{proposition}\label{lem:hada}
Under Assumption \ref{ass:boot}, the constructed quantile functions \eqref{eq:count} are Hadamard differentiable at the inputs $\mathbf{F}_t\coloneqq (F_{0t},\ldots, F_{Jt})$ tangentially to $\mathbf{C}[a,b]$, the $J$-dimensional product space of continuous functions on $[a,b]$, at $\lowoverset{\simCirc}{\mathcal{C}}$ almost every $\mathbf{F}_t^{-1}$ satisfying Assumption \ref{ass:boot}.    
\end{proposition}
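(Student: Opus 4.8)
The plan is to exhibit the constructed quantile function as a composition of three maps and invoke the chain rule for Hadamard differentiability. Writing $\mathbf{g}=(g_0,\dots,g_J)$ for a generic tuple of quantile functions, the map $\mathbf{F}_t\mapsto F^{-1}_{0t,N}$ factors as: (i) the coordinatewise quantile map $\iota:\mathbf{F}_t\mapsto\mathbf{F}^{-1}_t$; (ii) the weight-selection map $W:\mathbf{g}\mapsto\lambda^*(\mathbf{g})\coloneqq\argmin_{\lambda\in\Delta^J}\int_0^1\bigl\lvert\sum_{j=1}^J\lambda_j g_j(q)-g_0(q)\bigr\rvert^2\dd q$; and (iii) the reconstruction map $R:(\lambda,\mathbf{g})\mapsto\sum_{j=1}^J\lambda_j g_j$, so that $F^{-1}_{0t,N}=R\bigl(W(\iota(\mathbf{F}_t)),\iota(\mathbf{F}_t)\bigr)$. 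Steps (i) and (iii) are routine; step (ii) carries the substance and is what forces the ``$\lowoverset{\simCirc}{\mathcal{C}}$ almost every'' qualifier.

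For (i), Assumption \ref{ass:boot} gives that each $F_{jt}$ is continuously differentiable with strictly positive density on the relevant interval, so the classical quantile-process result \citep[Lemma~3.9.23]{van1996weak} yields Hadamard differentiability of $F\mapsto F^{-1}$ at $F_{jt}$ tangentially to $C[a,b]$, with derivative $h\mapsto-(h/f_{jt})\circ F^{-1}_{jt}$ (a continuous, hence $L^2$, function on the relevant quantile interval, which in the unbounded-support case we take to be a compact subinterval of $(0,1)$); taking products gives Hadamard differentiability of $\iota$ tangentially to $\mathbf{C}[a,b]$ at every admissible $\mathbf{F}_t$. For (iii), $R$ is continuous and bilinear, hence Fr\'echet---thus Hadamard---differentiable everywhere, with derivative $(d\lambda,d\mathbf{g})\mapsto\sum_{j=1}^J(d\lambda_j\,g_j+\lambda_j\,dg_j)$.

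For (ii), I would expand the objective as the quadratic $\lambda\mapsto\lambda^\top\mathbf{H}(\mathbf{g})\lambda-2\,b(\mathbf{g})^\top\lambda+c(\mathbf{g})$, where $\mathbf{H}(\mathbf{g})_{jk}=\int g_j g_k$ is the Hessian of Assumption \ref{ass:boot} and $b,c$ are, like $\mathbf{H}$, fixed continuous quadratic functionals of $\mathbf{g}\in\bigtimes L^2$ (hence $C^\infty$). Strong positive definiteness of $\mathbf{H}$ makes the objective strongly convex in $\lambda$ with modulus controlled by $\alpha$, so the minimizer over the compact polytope $\Delta^J$ is unique and $W=\Lambda\circ(\mathbf{H}(\cdot),b(\cdot))$, where $\Lambda$ is the finite-dimensional strongly convex quadratic-program solution map on $\{\mathbf{H}\succeq\alpha I\}\times\mathbb{R}^J$. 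Standard parametric-optimization arguments (piecewise application of the KKT system across active sets, with Lipschitz constants made uniform by the $\alpha$-bound) show $\Lambda$ is locally Lipschitz and real-analytic---hence Fr\'echet differentiable---off a closed Lebesgue-null set $N$, a finite union of algebraic hypersurfaces collecting the points where the optimal active set is ambiguous or strict complementarity fails. Since quadratic functionals are Lipschitz on bounded sets and quantile functions of (suitably restricted) laws lie in a fixed ball of $\bigtimes L^2$, $W$ is locally Lipschitz; its non-G\^ateaux-differentiability set therefore lies in the $\sigma$-ideal $\lowoverset{\simCirc}{\mathcal{C}}$ \citep[Definition~1.1]{preiss2014gateaux}, and---$W$ being Lipschitz---Gâteaux differentiability upgrades to Hadamard differentiability there. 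Composing via the chain rule, $F^{-1}_{0t,N}$ is Hadamard differentiable tangentially to $\mathbf{C}[a,b]$ at every $\mathbf{F}_t$ with $(\mathbf{H}(\mathbf{F}^{-1}_t),b(\mathbf{F}^{-1}_t))\notin N$; since the latter map is Hadamard differentiable everywhere and $N$ is Lebesgue-null, this exceptional set is again $\lowoverset{\simCirc}{\mathcal{C}}$-null, which is the assertion.

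The main obstacle is controlling this exceptional set. Two points need care: (a) showing $\Lambda$ (hence $W$) is genuinely Lipschitz with a differentiable-off-a-null-set structure---routine, but it must use the strong-positive-definiteness constant $\alpha$ to get bounds uniform across active sets; and (b) the $\sigma$-ideal bookkeeping, namely that the preimage of the Lebesgue-null hypersurface set $N$ under the finite-rank map $\mathbf{F}_t\mapsto(\mathbf{H}(\mathbf{F}^{-1}_t),b(\mathbf{F}^{-1}_t))$ stays inside $\lowoverset{\simCirc}{\mathcal{C}}$---this is exactly where the ``regularly null on curves running in the direction of an open cone'' description enters, by writing $N$ locally as graphs and checking that the defining functionals have non-vanishing directional derivatives along appropriate cones. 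The remaining ingredients---the chain rule for Hadamard derivatives, tangency to $\mathbf{C}[a,b]$, and bilinearity of $R$---are bookkeeping.
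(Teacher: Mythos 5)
Your overall architecture coincides with the paper's: decompose \eqref{eq:count} into the quantile map (handled by \citealp[Lemma 3.9.23]{van1996weak}), the weight-selection map, and the bilinear reconstruction, apply the chain rule for Hadamard derivatives, and treat the weight map by showing it is (locally) Lipschitz into $\mathbb{R}^J$ and invoking the infinite-dimensional Rademacher-type result of \citet{preiss2014gateaux} to obtain G\^ateaux---hence, by Lipschitzness, Hadamard---differentiability outside a $\lowoverset{\simCirc}{\mathcal{C}}$-null set. Where you genuinely differ is in how the Lipschitz property is established: the paper characterizes $\vec{\lambda}_t^*$ via the projected first-order fixed-point condition and runs the variational-inequality sensitivity argument of \citet{dafermos1988sensitivity} (strong monotonicity from $\mathbf{H}_t$, Lipschitzness of the gradient in $\vec{\lambda}$ and in $\mathbf{F}_t^{-1}$, Lipschitz projection onto $\Delta^J$), whereas you factor $W=\Lambda\circ(\mathbf{H}(\cdot),b(\cdot))$ through finite dimensions and use strong convexity of the quadratic program on $\{\mathbf{H}\succeq\alpha I\}$ together with Lipschitzness of the quadratic functionals $(\mathbf{H},b)$ on bounded sets of the product $L^2$ space. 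Your route is more elementary and makes transparent that the only infinite-dimensional input to the weights is a smooth finite-rank statistic; both routes use the strong positive definiteness of $\mathbf{H}_t$ in the same essential way. Two small items to tie up: the Preiss--Zaj\'{\i}\v{c}ek step needs the Radon--Nikodym property of the target space (immediate for $\mathbb{R}^J$, and the paper cites their Theorem 4.2 rather than only Definition 1.1), and you should add the paper's Step 3, i.e.\ that the weights actually used are the time average $\vec{\lambda}^*=\frac{1}{t^*}\sum_{t\leq t^*}\vec{\lambda}^*_t$ and the post-period construct combines them with $\mathbf{F}_t^{-1}$ for $t>t^*$ (a finite average plus the product rule, so this is routine).

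One step as written does not hold: the final localization of the exceptional set as the preimage $\{\mathbf{F}_t:(\mathbf{H}(\mathbf{F}^{-1}_t),b(\mathbf{F}^{-1}_t))\in N\}$, justified by ``the latter map is Hadamard differentiable everywhere and $N$ is Lebesgue-null.'' Differentiability of the finite-rank map plus Lebesgue-nullity of $N$ does not imply the preimage is $\lowoverset{\simCirc}{\mathcal{C}}$-null (a constant map is everywhere differentiable, its image is a null set, and the preimage is the whole space); making this inference rigorous requires precisely the transversality/cone verification you defer to the end, and without it one cannot rule out data-generating processes whose image under $(\mathbf{H},b)$ sits inside the degenerate set. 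Fortunately the detour is unnecessary: your own preceding sentence---$W$ locally Lipschitz, hence G\^ateaux and therefore Hadamard differentiable at $\lowoverset{\simCirc}{\mathcal{C}}$-almost every point---already yields the proposition once composed with the everywhere-Hadamard-differentiable quantile and reconstruction maps, and this is exactly the paper's argument. Drop the preimage claim (or prove the cone condition) and the proof is complete.
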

\begin{proof}
    Under Assumption \ref{ass:boot} the quantile functions $F^{-1}_{jt}$ are all Hadamard differentiable tangentially to $C[a,b]$ in the space $L^2([a,b])$ of square integrable functions with respect to Lebesgue measure \citep[Lemma 3.9.23]{van1996weak}. Since $C[a,b]$ and $L^2([a,b])$ are Banach spaces, we can use the chain rule for Hadamard derivatives in Banach spaces \citep[Lemma 3.9.3]{van1996weak} and can therefore focus on the Hadamard differentiability of the optimal $\vec{\lambda}_t^*(\mathbf{F}^{-1}_t)$ in $\mathbf{F}^{-1}_t = (F^{-1}_{0t},\ldots, F^{-1}_{Jt})$ in each time period $t$, obtained via
    \[\vec{\lambda}^*_t(\mathbf{F}_{t}^{-1}) = \argmin_{\lambda\in\Delta^J} f(\vec{\lambda}_t,\mathbf{F}^{-1}_t)\equiv \argmin_{\lambda\in\Delta^J}\int_0^1 \left\lvert \sum_{j=1}^J\lambda_{jt}F^{-1}_{jt}(q)-F^{-1}_{0t}(q)\right\rvert^2\dd q.\]

\noindent \emph{Step 1: Rewriting the constrained problem:}\\
To show Hadamard differentiability Lebesgue almost everywhere, we use the fact that the optimization problem is strictly convex since the objective function is a squared $L^2$-norm. 
This implies that the optimal solution $\vec{\lambda}^*_t$ is unique. 
Then we write the constrained problem as
\[\vec{\lambda}^*_t(\mathbf{F}_{t}^{-1}) = \argmin_{\lambda\in\mathbb{R}^J} \int_0^1\left\lvert \sum_{j=1}^J\lambda_{jt}F^{-1}_{jt}(q)-F^{-1}_{0t}(q)\right\rvert^2\dd q + \chi_{\Delta^J}(\vec{\lambda}),\]
where $\chi_A(x)$ is the indicator function that is zero if $x\in A$ and $+\infty$ otherwise. 

The weights $\vec{\lambda}_t^*$ are optimal if and only if they satisfy the first order condition \citep[p.~149]{parikh2014proximal}
\[0\in \partial_{\vec{\lambda}} f(\vec{\lambda}_t^*,\mathbf{F}_t^{-1})+ \partial_{\vec{\lambda}_t^*}\chi_{\Delta^J}(\vec{\lambda}_t^*),\] where $\partial_{\vec{\lambda}}$ denotes the vector subgradient with respect to $\vec{\lambda}\in \mathbb{R}^J$. This is equivalent to the fixed point condition \citep[p.~150]{parikh2014proximal}
\begin{equation}\label{eq:FOC}
\vec{\lambda}_t^*(\mathbf{F}^{-1}_t) = \pi_{\Delta^J}\left(\vec{\lambda}_t^*(\mathbf{F}^{-1}_t) - \partial_{\vec{\lambda}}f(\vec{\lambda}_t^*,\mathbf{F}^{-1}_t)\right), 
\end{equation}
where $\pi_{\Delta^J}$ is the standard metric projection onto $\Delta^J$ using the Euclidean norm, which follows from e.g.~\citet[Proposition 2.4.12]{clarke1990optimization}.

Since $f$ is differentiable in $\vec{\lambda}$ its subgradient $\partial_{\vec{\lambda}}$ is single-valued and coincides with the standard gradient. It is a $1\times J$-vector, with $j$-th entry computed via 
\begin{align*}
    &\partial_{\lambda_j} \int \left\lvert \sum_{j=1}^J \lambda_{jt}F_{jt}^{-1}(q)-F_{0t}^{-1}(q)\right\rvert^2\dd q\\
    =&\int \partial_{\lambda_j}\left\lvert \sum_{j=1}^J \lambda_{jt}F_{jt}^{-1}(q)-F_{0t}^{-1}(q)\right\rvert^2\dd q\\
    =&2\int \left(\sum_{k=1}^J \lambda_{kt}F_{kt}^{-1}(q)-F_{0t}^{-1}(q)\right) F_{jt}^{-1}(q)\dd q,
\end{align*}
where the second line follows from the dominated convergence theorem in conjunction with the boundedness of the quantile functions under Assumption \ref{ass:boot}.

We now want to take the Hadamard derivative $D_{\mathbf{F}_t^{-1}}$ of this first-order condition at an optimal $\vec{\lambda}_t^*$ with respect to the $J+1$-dimensional vector $\mathbf{F}_t^{-1} = (F_{0t}^{-1},\ldots, F_{Jt}^{-1})$. The projection $\pi_{\Delta^J}$ is only Lipschitz continuous in its argument \citep[Lemma 6.54d]{aliprantis2006infinite} and hence by Rademacher's theorem only differentiable almost everywhere. 

We therefore analyze the right-hand side of \eqref{eq:FOC}. Formally, using the chain rule of Hadamard derivatives in Banach spaces \citep[Theorem 3.9.3]{van1996weak}, we have
\begin{multline*}
    D_{\mathbf{F}_t^{-1}}\left(\vec{\lambda}_t^*(\mathbf{F}^{-1}_t) - \partial_{\vec{\lambda}}f(\vec{\lambda}_t^*,\mathbf{F}_t^{-1})\right) \\= D_{\mathbf{F}_t^{-1}}\vec{\lambda}_t^*(\mathbf{F}_t^{-1}) - D_{\mathbf{F}_t^{-1}}\partial_{\vec{\lambda}} f(\vec{\lambda}_t,\mathbf{F}_t^{-1}) - \partial_{\vec{\lambda}}^2 f(\vec{\lambda}_t,\mathbf{F}_t^{-1}) D_{\mathbf{F}_t^{-1}}\vec{\lambda}(\mathbf{F}_t^{-1}),
\end{multline*}
where $D_{\mathbf{F}_t^{-1}}\partial_{\vec{\lambda}} f(\vec{\lambda}_t,\mathbf{F}_t^{-1})$ is arranged as a Jacobian of dimension $J\times (J+1)$, $\partial_{\vec{\lambda}}^2 f(\vec{\lambda}_t,\mathbf{F}_t^{-1})$ is a Hessian of dimension $J\times J$ and $D_{\mathbf{F}_t^{-1}}\vec{\lambda}(\mathbf{F}_t^{-1})$ is again a Jacobian of dimension $J\times (J+1)$. 

We now compute the $jk$-th element in the Jacobian $D_{\mathbf{F}_t^{-1}}\partial_{\vec{\lambda}} f(\vec{\lambda}_t,\mathbf{F}_t^{-1})$ and the $jk$-th element in the Hessian $\partial^2_{\vec{\lambda}}f\equiv \mathbf{H}_t$ . The Hadamard derivative of $\partial_{\lambda_j} f(\vec{\lambda}_t,\mathbf{F}_t^{-1})$ tan\-gen\-tially to $C[a,b]$ with respect to $F_{kt}^{-1}$ for $k=1,\ldots, J$ is $\lambda_{kt}\int F^{-1}_{kt}(q)F^{-1}_{jt}(q)\dd q$,
which follows from an application of the dominated convergence theorem in conjunction with the square integrability of the quantile functions and H\"older's inequality. For $k=0$, the Hadamard derivative is $\int F^{-1}_{0t}(q)F^{-1}_{jt}(q)\dd q$.
The $jk$-th entry in the Hessian $\partial^2_{\vec{\lambda}}f$ takes the form $\int F^{-1}_{kt}(q)F^{-1}_{jt}(q)\dd q$, again using the dominated convergence theorem. 

The object of interest is $D_{\mathbf{F}_t^{-1}}\vec{\lambda}(\mathbf{F}_t^{-1})$, which is the Hadamard derivative of $\vec{\lambda}_t$ with respect to the quantile functions at the optimal value $\vec{\lambda}_t^*$. We want to use the fixed-point condition \eqref{eq:FOC} to show that it is Hadamard differentiable up to a negligible set in $\mathbf{C}[a,b]$. \\

\noindent \emph{Step 2: Analyzing $D_{F^{-1}_t}\vec{\lambda}_t(\mathbf{F}_t^{-1})$ at $\vec{\lambda}_t^*$.}\\
To fix ideas, consider the case where the term inside the projection in \eqref{eq:FOC} lies inside $\Delta^J$. In this case the projection is the identity mapping and a rearranging of \eqref{eq:FOC} gives
\[D_{\mathbf{F}_t^{-1}}\partial_{\vec{\lambda}} f(\vec{\lambda}_t,\mathbf{F}_t^{-1}) + \partial_{\vec{\lambda}}^2 f(\vec{\lambda}_t,\mathbf{F}_t^{-1}) D_{\mathbf{F}_t^{-1}}\vec{\lambda}(\mathbf{F}_t^{-1})=0.\] 
Hence, $D_{F^{-1}_t}\vec{\lambda}_t(\mathbf{F}_t^{-1})$ is Hadamard differentiable if $\partial_{\vec{\lambda}}^2 f(\vec{\lambda}_t,\mathbf{F}_t^{-1})\equiv \mathbf{H}_t$ is invertible. 

We now consider the general case. We first show that $\vec{\lambda}_t^*(\mathbf{F}_t^{-1})$ defined via the fixed-point relation \eqref{eq:FOC} is Lipschitz continuous in $\mathbf{F}_t^{-1}$ \citep{dafermos1988sensitivity}. Then we rely on infinite-dimensional analogues of Rademacher's and Stepanov's theorem \citep{preiss2014gateaux} that guarantee that the optimal weights are Hadamard differentiable in $\mathbf{F}_t^{-1}$ up to a small set in the Banach space $\bigtimes_{j=0}^JL^2([a,b])$.

The Lipschitz continuity of $\vec{\lambda}_t^*$ in $\mathbf{F}_t^{-1}$ follows from the exact same argument as Lemma 2.4 in \citet{dafermos1988sensitivity}, but in a Banach space setting; since the proof of Lemma 2.4 only works with norms, this argument directly extends to the Banach space setting. We only have to check the corresponding (Lipschitz-) continuity conditions. For this, define $g(\vec{\lambda}_t^*, \mathbf{F}_t^{-1})\coloneqq \partial_{\vec{\lambda}}f(\vec{\lambda}_t^*,\mathbf{F}_t^{-1})$ and note that the restriction on $\mathbf{H}_t$ implies that $g(\vec{\lambda}_t^*,\mathbf{F}_t^{-1})$ is strongly monotone in $\vec{\lambda}_t^*$ in the sense of equation (1.3) in \citet{dafermos1988sensitivity}. Indeed,
\begin{align*}
    \langle g(\vec{\lambda}_t^*,\mathbf{F}_t^{-1}) - g(\tilde{\lambda}_t^*,\mathbf{F}_t^{-1}), \vec{\lambda}_t^* - \tilde{\lambda}_t^*\rangle
    =& 2\sum_{j=1}^J\sum_{k=1}^J (\lambda_k - \tilde{\lambda}_k)(\lambda_j-\tilde{\lambda}_j)\int F_{kt}^{-1}(q)F_{jt}^{-1}(q)\dd q\\
    =& 2(\vec{\lambda}_t^* - \tilde{\lambda}_t^*)\mathbf{H}_t(\vec{\lambda}_t^* - \tilde{\lambda}_t^*)^\top \\
    \geq& \alpha\|\vec{\lambda}_t^* - \tilde{\lambda}_t^*\|_2^2
\end{align*}
for some $\alpha>0$. 

Analogously, the Lipschitz continuity of $g$ in $\vec{\lambda}_t^*$ for fixed $\mathbf{F}_t^{-1}$ as required in equation (1.4) in \citet{dafermos1988sensitivity} follows from
\begin{align*}
    &\left\|g(\vec{\lambda}_t^*,\mathbf{F}_t^{-1}) - g(\tilde{\lambda}_t^*,\mathbf{F}_t^{-1})\right\|_2\\
    =& 2\left[\sum_{j=1}^J\left( \sum_{k=1}^J\left(\lambda_{kt} - \tilde{\lambda}_{kt}\right)\int F^{-1}_{kt}(q)F_{jt}^{-1}(q)\dd q\right)^2\right]^{1/2}\\
    \leq &2\left[\sum_{j=1}^J\left(\sum_{k=1}^J(\lambda_{kt}-\tilde{\lambda}_{kt})^2\right)\left(\sum_{k=1}^J \left(\int F_{kt}^{-1}(q)F_{jt}^{-1}(q)\dd q\right)^2\right)\right]^{1/2}\\
    =& 2 \left(\sum_{k=1}^J(\lambda_{kt}-\tilde{\lambda}_{kt})^2\right)^{1/2}\left[\sum_{j=1}^J\sum_{k=1}^J \left(\int F_{kt}^{-1}(q)F_{jt}^{-1}(q)\dd q\right)^2\right]^{1/2}\\
    =& 2\|\vec{\lambda}_t^*-\tilde{\lambda}_t^*\|_2 \|\mathbf{H}_t\|_F,
\end{align*}
where the inequality follows from Cauchy-Schwarz and $\|\cdot\|_F$ denotes the Frobenius norm, so that the Lipschitz constant is $2\|\mathbf{H}_t\|_F$.

We also need to show that $g(\vec{\lambda}_t^*,\mathbf{F}_t^{-1})$ is Lipschitz continuous in $\mathbf{F}_t^{-1}$. Since $\mathbf{F}_t^{-1}$ is an element in a product space, we can show Lipschitz continuity with respect to every entry separately. Indeed, denoting $\mathbf{F}_{-kt}$ the vector where we fix every $F_{jt}^{-1}$ except $F_{kt}^{-1}$, we write
\begin{align*}
    &\left\|g(\vec{\lambda}_t^*,\mathbf{F}_{-kt}) - g(\vec{\lambda}_t^*,\tilde{\mathbf{F}}_{-kt})\right\|_2\\
    =&2\left[\sum_{j=1}^J\left(\lambda_{kt} \int (F_{kt}^{-1}(q) - \tilde{F}_{kt}^{-1}(q))F_{jt}^{-1}(q)\dd q\right)^2\right]^{1/2}\\
    \leq & 2\left[\sum_{j=1}^J\lambda_{kt}^2\left( \int \lvert (F_{kt}^{-1}(q) - \tilde{F}_{kt}^{-1}(q)) F_{jt}^{-1}(q)\rvert\dd q\right)^2\right]^{1/2}\\
    \leq & 2\left[\lambda_{kt}^2\int (F_{kt}^{-1}(q) - \tilde{F}_{kt}^{-1}(q))^2\dd q \sum_{j=1}^J\int( F_{jt}^{-1}(q))^2\dd q\right]^{1/2}\\
    =& 2\lvert \lambda_{kt}\rvert\|F^{-1}_{kt}-\tilde{F}_{kt}^{-1}\|_{L^2([0,1])} \sqrt{\sum_{j=1}^J\|F_{jt}^{-1}\|^2_{L^2([0,1])}}
\end{align*}
by Cauchy-Schwarz and Hölder's inequality, which by the fact that $F_{jt}^{-1}\in L^2([0,1])$ (which holds since all quantile functions are bounded by Assumption \ref{ass:boot}) for all $j,t$ implies Lipschitz continuity in each $F_{kt}^{-1}$ separately and hence in the product topology. 

Finally, since $\Delta^J$ does not vary with $\mathbf{F}_t^{-1}$, the projection is trivially Lipschitz continuous with respect to $\mathbf{F}_t^{-1}$. Since all four criteria are satisfied in our case, the same argument as in Lemma 2.4 of \citet{dafermos1988sensitivity} then implies that $\vec{\lambda}_t^*(\mathbf{F}_t^{-1})$ is Lipschitz continuous in $\mathbf{F}_t^{-1}$. 

Based on Lipschitz continuity, we now use analogues of Rademacher's and Stepanov's theorem in Banach spaces to prove that $\vec{\lambda}_t^*(\mathbf{F}_t^{-1})$ is Hadamard differentiable tangentially to $\mathbf{C}[a,b]$ up to a negligible set. There are several results in the mathematical literature, but we rely on the results in \citet{preiss2014gateaux}. In particular, since $\vec{\lambda}_t^*(\mathbf{F}_t^{-1})$ maps to $\mathbb{R}^J$, which trivially is a Banach space with the Radon-Nikodym property, Theorem 4.2 in \citet{preiss2014gateaux} shows that $\vec{\lambda}_t^*(\mathbf{F}_t^{-1})$ is Hadamard differentiable at $\lowoverset{\simCirc}{\mathcal{C}}$ almost every $\mathbf{F}_t^{-1}$ at which $\vec{\lambda}_t^*(\mathbf{F}_t^{-1})$ is Lipschitz continuous. \\

\noindent \emph{Step 3: From Hadamard differentiability of $\vec{\lambda}_t^*$ to $F^{-1}_{0t,N}$.}
The above argument was for each $t$. Since the optimal $\vec{\lambda}^*$ used in the construction of $F^{-1}_{0t,N}$ is averaged over time periods $t\leq t^*$, this immediately implies that $\vec{\lambda}^*$ is Hadamard differentiable tangentially to $\mathbf{C}[a,b]$ at $\mathbf{F}_t^{-1}$ at $\lowoverset{\simCirc}{\mathcal{C}}$ almost every $\mathbf{F}_t^{-1}$. Applying the chain rule of Hadamard derivatives in form of the product rule once more \citep[e.g.][section 3.9.4.4]{van1996weak} then implies that $F^{-1}_{0t,N}$ is Hadamard differentiable at every $t$ tangentially to $\mathbf{C}[a,b]$ at $\mathbf{F}_t^{-1}$ at $\lowoverset{\simCirc}{\mathcal{C}}$ almost every $\mathbf{F}_t^{-1}$ satisfying Assumption \ref{ass:boot}.
\end{proof}

Based on Proposition \ref{lem:hada} we can then show that the bootstrap works almost everywhere, where the ``almost everywhere'' statement is with respect to $\lowoverset{\simCirc}{\mathcal{C}}$ almost every $\mathbf{F}_t^{-1}$ satisfying Assumption \ref{ass:boot}. In the following, we assume for the sake of simplicity that for all units $j$ and time periods $t$ we have the same number of observations $n_{jt}\equiv n$, which is only done to simply the statement. As mentioned earlier, if the number of observations are asymptotically equivalent, this result can be extended to this setting.  
\begin{theorem}\label{thm:boot}
Suppose that $n_{jt}=n$ for all $j,t$ and let Assumption \ref{ass:boot} hold. Then the bootstrap works in probability at $\lowoverset{\simCirc}{\mathcal{C}}$-almost every $\mathbf{F}_t^{-1}$ in the sense that
\[\sup_{h\in \mathrm{BL}}\left\lvert \mathbb{E}_{B} h(\tilde{\mathbb{G}}_n) - \mathbb{E}h\left(\mathbb{G}_n\right)\right\rvert\overset{P}{\to} 0.\]
\end{theorem}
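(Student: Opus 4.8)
The plan is to obtain the result as an application of the functional delta method for the bootstrap, exactly in the form given by \citet[Theorem 3.9.11]{van1996weak}. The strategy has three ingredients that need to be assembled: (i) weak convergence of the underlying empirical quantile processes together with bootstrap consistency for each unit $j$; (ii) Hadamard differentiability of the map that sends the vector of quantile functions to the constructed counterfactual quantile function, which is precisely Proposition \ref{lem:hada}; and (iii) the conditional multiplier / exchangeable-bootstrap version of the delta method, which combines (i) and (ii) to yield the stated conclusion for the functional $F^{-1}_{0t,N}$.

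First I would set up the empirical process framework. Under Assumption \ref{ass:boot}, for each $j$ the empirical distribution $\hat F_{jtn}$ built from the $n$ iid draws $Y_{ijt}$ satisfies $\sqrt n(\hat F_{jtn}-F_{jt})\rightsquigarrow \mathbb{B}_j\circ F_{jt}$ (a Brownian bridge), and the sampling-without-replacement resampling scheme in Algorithm \ref{algo:sub} — drawing a subsample of size $N_{jt}=n$ without replacement from $\{0,\dots,n\}$ — is an exchangeable bootstrap that is consistent for the empirical process; concretely it is covered by the general exchangeable-weights results of \citet[Section 3.6]{van1996weak}, so $\sqrt n(\tilde F_{jtn}-\hat F_{jtn})$ converges conditionally in probability to the same limit. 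Because the $J+1$ units are independent and resampled independently, the joint (vector-valued) empirical process in $\bigtimes_{j=0}^J D[a,b]$ converges jointly, and likewise for the bootstrap version; composing with the inverse map, which is Hadamard differentiable tangentially to $C[a,b]$ at each $F_{jt}$ under Assumption \ref{ass:boot} \citep[Lemma 3.9.23]{van1996weak}, transfers both statements to the quantile processes $\sqrt n(\hat{\mathbf F}^{-1}_{tn}-\mathbf F^{-1}_t)$ in $\bigtimes_{j=0}^J L^2([a,b])$ (restricting to a compact subinterval of $(0,1)$ in the unbounded-support case). This is the delicate point: the quantile map is only Hadamard-differentiable tangentially to the subspace of \emph{continuous} functions, so one must verify the limit process takes values in $\mathbf C[a,b]$ a.s., which holds because the Brownian-bridge limits have continuous sample paths and the densities are continuous and bounded away from zero on $[a,b]$.

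Next I would invoke Proposition \ref{lem:hada}: the map $\Phi:\mathbf F_t^{-1}\mapsto F^{-1}_{0t,N}=\sum_{j=1}^J\lambda_j^*(\mathbf F_t^{-1})\,F^{-1}_{jt}$ (with $\lambda^*$ the time-average of the per-period minimizers) is Hadamard differentiable tangentially to $\mathbf C[a,b]$ at $\lowoverset{\simCirc}{\mathcal{C}}$-almost every $\mathbf F_t^{-1}$ satisfying Assumption \ref{ass:boot}. Fixing such a point $\mathbf F_t^{-1}$, the functional delta method for the bootstrap \citep[Theorem 3.9.11]{van1996weak} applies verbatim: since $\hat{\mathbf F}^{-1}_{tn}$ is consistent, $\sqrt n(\hat{\mathbf F}^{-1}_{tn}-\mathbf F^{-1}_t)\rightsquigarrow\mathbb G$ with $\mathbb G$ concentrated on the tangent space, and the bootstrapped version satisfies the matching conditional CLT, the theorem yields that $\tilde{\mathbb G}_n=\sqrt n(\tilde F^{-1}_{0tn,N}-\hat F^{-1}_{0tn,N})$ converges conditionally in probability to the same Gaussian limit $D\Phi_{\mathbf F_t^{-1}}(\mathbb G)$ as $\mathbb G_n=\sqrt n(\hat F^{-1}_{0tn,N}-F^{-1}_{0tn,N})$. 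Conditional weak convergence to a common tight limit is, by definition \citep[Chapter 3.6]{van1996weak}, exactly the statement $\sup_{h\in\mathrm{BL}}|\mathbb E_B h(\tilde{\mathbb G}_n)-\mathbb E h(\mathbb G_n)|\overset P\to 0$, which is the claim. Stacking over the finitely many time periods $t=1,\dots,T$ (the construction uses only $t\le t^*$ for the weights, then plugs into each $t$) poses no difficulty since finite products of Hadamard-differentiable maps are Hadamard differentiable and joint weak convergence over a finite index set is automatic.

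The main obstacle I anticipate is not the delta-method bookkeeping but the interface between the ``almost everywhere'' nature of Proposition \ref{lem:hada} and the probabilistic statement: Hadamard differentiability is only guaranteed off a $\lowoverset{\simCirc}{\mathcal C}$-null set in the infinite-dimensional input space, so one must argue that the true data-generating $\mathbf F_t^{-1}$ lies outside this exceptional set — which is precisely why the theorem is phrased ``at $\lowoverset{\simCirc}{\mathcal{C}}$-almost every $\mathbf F_t^{-1}$'' and why the paper's earlier discussion stresses that the offending configurations ``will not occur.'' A secondary technical point requiring care is that the minimizer $\lambda^*$ is only \emph{unique} (hence the map single-valued and Lipschitz) because of the strong-positive-definiteness of $\mathbf H_t$ in Assumption \ref{ass:boot}; without it the selection would be set-valued and the delta method would not apply. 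Modulo checking these two points, the proof is a clean assembly of the exchangeable bootstrap for empirical processes, Hadamard differentiability of the quantile transform, Proposition \ref{lem:hada}, and the bootstrap delta method.
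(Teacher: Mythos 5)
Your proposal is correct and follows essentially the same route as the paper: bootstrap consistency of the empirical quantile processes under Assumption \ref{ass:boot}, Hadamard differentiability of the constructed quantile functions from Proposition \ref{lem:hada}, and then the functional delta method for the bootstrap \citep[Theorem 3.9.11]{van1996weak}. The only cosmetic difference is that you rebuild the quantile-process bootstrap from the CDF-level exchangeable bootstrap plus the inverse map (Lemma 3.9.23), whereas the paper cites Example 3.9.24 of \citet{van1996weak} directly, which packages exactly that argument.
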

In this result $\mathrm{BL}$ denotes the set of bounded Lipschitz functions, $\mathbb{E}_B$ denotes the conditional expectation with respect to the data process (based on the number of resamples $B_{injt}$ for every observation $Y_{ijt}$), ``$\overset{P}{\to}$'' denotes convergence in probability, and all probabilities and expectations are considered to be outer measures \citep[e.g.][section 1.2]{van1996weak}. In words, this implies that asymptotically, the distributions of the empirical bootstrap process conditional on the data and the true empirical process are equivalent in probability, and that one can use the bootstrapped empirical process to obtain the confidence intervals. 

\begin{proof}
    The result follows immediately from the fact that under Assumption \ref{ass:boot} the empirical quantile processes corresponding to $F_{jt}^{-1}$ are bounded Donsker classes, which implies that the bootstrap procedure works for them in the sense written out in the theorem; this follows from Example 3.9.24 in \citet{van1996weak}. Then Proposition \ref{lem:hada} implies the Hadamard differentiability of the constructed quantile functions at $\mathbf{F}^{-1}_t$ at $\lowoverset{\simCirc}{\mathcal{C}}$ almost every $\mathbf{F}_t^{-1}$. The conclusion then follows from the Delta method for bootstrap \citep[Theorem 3.9.11]{van1996weak}.
\end{proof}

\end{document}